\providecommand{\U}[1]{\protect\rule{.1in}{.1in}}
\newtheorem{theorem}{Theorem}
\newtheorem{acknowledgement}[theorem]{Acknowledgement}
\newtheorem{corollary}[theorem]{Corollary}
\newtheorem{lemma}[theorem]{Lemma}
\newtheorem{notation}[theorem]{Notation}
\newtheorem{proposition}[theorem]{Proposition}
\newenvironment{proof}[1][Proof]{\noindent\textbf{#1.} }{\ \rule{0.5em}{0.5em}}
\begin{document}

\title{Born--Jordan Quantization and the Uncertainty Principle}
\author{Maurice A. de Gosson\\University of Vienna\\Faculty of Mathematics (NuHAG)\\Nordbergstr. 15, 1090 Vienna}
\maketitle

\begin{abstract}
The Weyl correspondence and the related Wigner formalism lie at the core of
traditional quantum mechanics. We discuss here an alternative quantization
scheme, whose idea goes back to Born and Jordan, and which has recently been
revived in another context, namely time-frequency analysis. We show that in
particular the uncertainty principle does not enjoy full symplectic covariance
properties in the Born and Jordan scheme, as opposed to what happens in the
Weyl quantization.

\end{abstract}

\section{Introduction}

The problem of \textquotedblleft quantization\textquotedblright\ of an
\textquotedblleft observable\textquotedblright\ harks back to the early days
of quantum theory; mathematically speaking, and to use a modern language, it
is the problem of assigning to a symbol a pseudo-differential operator in a
way which is consistent with certain requirements (symmetries under a group of
transformations, positivity, etc.). Two of the most popular quantization
schemes are the Kohn--Nirenberg and Weyl correspondences. The first is widely
used in the theory of partial differential equations and in time-frequency
analysis (mainly for numerical reasons), the second is the traditional
quantization used in quantum mechanics. Both are actually particular cases of
Shubin's pseudo-differential calculus, where one can associate to a given
symbol $a$ an infinite family $(A_{\tau})_{\tau}$ of pseudo-differential
operators parametrized by a real number $\tau$, the cases $\tau=1$ and
$\tau=\frac{1}{2}$ corresponding to, respectively, Kohn--Nirenberg and Weyl
operators. It turns out that each of Shubin's $\tau$-operators can be
alternatively defined in terms of a generalization $\operatorname*{Wig}_{\tau
}$ of the usual Wigner distribution by the formula%
\[
\langle A_{\tau}\psi,\overline{\phi}\rangle=\langle a,\operatorname*{Wig}%
\nolimits_{\tau}(\psi,\phi)\rangle
\]
and this observation has recently been used by researchers in time-frequency
analysis to obtain more realistic phase-space distributions (more about this
in the discussion at the end of the paper). They actually went one step
further by introducing a new distribution by averaging $\operatorname*{Wig}%
\nolimits_{\tau}$ for the values of $\tau$ in the interval $[0,1]$. This
leads, via the analogue of the formula above to a third class of
pseudo-differential operators, corresponding to the averaging of Shubin's
operators $A_{\tau}$. It was noted by the present author that this averaged
pseudo-differential calculus is actually an extension of one of the first
quantization schemes discovered by Born, Jordan, and Heisenberg around 1927,
prior to that of Weyl's.

The aim of the present paper is to give a detailed comparative study of the
Weyl and Born--Jordan correspondences (or \textquotedblleft quantization
schemes\textquotedblright) with an emphasis on the symplectic covariance
properties of the associated uncertainty principles.

\begin{notation}
We write $x=(x_{1},...,x_{n})$ and $p=(p_{1},...,p_{n})$ and $z=(x,p)$. In
matrix calculations $x,p,z$ are viewed as column vectors. The phase space
$\mathbb{R}^{2n}\equiv\mathbb{R}^{n}\times\mathbb{R}^{n}$ is equipped with the
standard symplectic form $\sigma(z,z^{\prime})=px^{\prime}-p^{\prime}x$;
equivalently $\sigma(z,z^{\prime})=Jz\cdot z^{\prime}$ where $J=%
\begin{pmatrix}
0_{n\times n} & I_{n\times n}\\
-I_{n\times n} & 0_{n\times n}%
\end{pmatrix}
$ is the standard symplectic matrix. We denote by $\mathcal{S}(\mathbb{R}%
^{2n})$ the Schwartz space of rapidly decreasing smooth functions and by
$\mathcal{S}^{\prime}(\mathbb{R}^{2n})$ its dual (the tempered distributions).
\end{notation}

\section{Discussion of Quantization}

After Werner Heisenberg's seminal 1925 paper \cite{heisenberg} which gave
rigorous bases to the newly born \textquotedblleft quantum
mechanics\textquotedblright, Born and Jordan \cite{bj} wrote the first
comprehensive exposition on matrix mechanics, followed by an article with
Heisenberg himself \cite{bjh}. These articles were an attempt to solve an
ordering problem: assume that some quantization process associated to the
canonical variables $x$ (position) and $p$ (momentum) two operators
$\widehat{X}$ and $\widehat{P}$ satisfying the canonical commutation rule
$\widehat{X}\widehat{P}-\widehat{P}\widehat{X}=i\hbar$. What should then the
operator associated to the monomial $x^{m}p^{n}$ be? Born and Jordan's answer
was%
\begin{equation}
x^{m}p^{n}\overset{\text{\textrm{BJ}}}{\longleftrightarrow}\frac{1}{n+1}%
\sum_{k=0}^{n}\widehat{P}^{n-k}\widehat{X}^{m}\widehat{P}^{k} \label{bj1}%
\end{equation}
which immediately leads to the \textquotedblleft symmetrized\textquotedblright%
\ operator $\frac{1}{2}(\widehat{X}\widehat{P}+\widehat{P}\widehat{X})$ when
the product is $xp$. In fact Weyl and Born--Jordan quantization lead to the
same operators for all powers $x^{m}$ or $p^{n}$, or for the product $xp$ (for
a detailed analysis of Born and Jordan's derivation see Fedak and Prentis
\cite{fepr09}, also Castellani \cite{ca78} and Crehan \cite{cr89}).
Approximately at the same time Hermann Weyl had started to develop his ideas
of how to quantize the observables of a physical system, and communicated them
to Max Born and Pascual Jordan (see Scholz \cite{scholz}). His basic ideas of
a group theoretical approach were published two years later
\cite{Weyl,WeylRob}. One very interesting novelty in Weyl's approach was that
he proposed to associate to an observable of a physical system what we would
call today a Fourier integral operator. In fact, writing the observable as an
inverse Fourier transform%
\begin{equation}
a(x,p)=\int_{\mathbb{R}^{2n}}e^{i(ps+xt)}\mathcal{F}a(s,t)dsdt \label{w1}%
\end{equation}
he defined its operator analogue by%
\begin{equation}
A=\int_{\mathbb{R}^{2n}}e^{i(\widehat{P}s+\widehat{X}t)}\mathcal{F}a(s,t)dsdt
\label{w2}%
\end{equation}
which is essentially the modern definition that will be given below (formula
(\ref{ahat}). We will denote the Weyl correspondence by $a\overset
{\text{\textrm{Weyl}}}{\longleftrightarrow}A_{\mathrm{W}}$ or $A_{\mathrm{W}%
}=\operatorname*{Op}(a)$. Weyl was led to this choice because of the immediate
ordering problems that occurred when one considered other observables than
monomials $a(x,p)=x^{k}$ or $a(x,p)=p^{\ell}$. For instance, using
Schr\"{o}dinger's rule what should the operator associated with $a(x,p)=xp$
be? Weyl's rule immediately yields the symmetrized\ quantization rule
\[
a(\widehat{X},\widehat{P})=\frac{1}{2}(\widehat{X}\widehat{P}+\widehat
{P}\widehat{X})
\]
and one finds that more generally (McCoy \cite{mccoy}, 1932) \
\begin{equation}
x^{m}p^{n}\overset{\text{\textrm{Weyl}}}{\longleftrightarrow}\frac{1}{2^{n}%
}\sum_{k=0}^{n}%
\begin{pmatrix}
n\\
k
\end{pmatrix}
\widehat{P}^{n-k}\widehat{X}^{m}\widehat{P}^{k}. \label{w3}%
\end{equation}

It turns out that the Weyl quantization rule (\ref{w3}) for monomials is a
particular case of the so-called \textquotedblleft$\tau$%
-ordering\textquotedblright: for any real number $\tau$ one defines%

\begin{equation}
x^{m}p^{n}\overset{\tau}{\longleftrightarrow}\sum_{k=0}^{n}%
\begin{pmatrix}
n\\
k
\end{pmatrix}
(1-\tau)^{k}\tau^{n-k}\widehat{P}^{k}\widehat{X}^{n}\widehat{P}^{n-k}
\label{taumn}%
\end{equation}
this rule reduces to Weyl's prescription when $\tau=\frac{1}{2}$. When
$\tau=1$ one gets the \textquotedblleft normal ordering\textquotedblright%
\ $\widehat{X}^{n}\widehat{P}^{n}$ familiar from the elementary theory of
partial differential equations, and $\tau=0$ yields the \textquotedblleft
anti-normal ordering\textquotedblright\ $\widehat{P}^{n}\widehat{X}^{n}$
sometimes used in physics. We now make the following fundamental observation:
the Born--Jordan prescription (\ref{bj1}) is obtained by averaging the $\tau
$-ordering on the interval $[0,1]$ (de Gosson \cite{transam}, de Gosson and
Luef \cite{golu1}). In fact
\[
\int_{0}^{1}(1-\tau)^{k}\tau^{n-k}d\tau=\frac{k!(n-k)!}{(n+1)!}%
\]
and hence%
\begin{equation}
x^{m}p^{n}\overset{\text{\textrm{BJ}}}{\longrightarrow}\frac{1}{n+1}\sum
_{k=0}^{m}\widehat{P}^{k}\widehat{X}^{n}\widehat{P}^{n-k}. \label{xpbj}%
\end{equation}

One interesting feature of the quantization rules above is the following:
suppose that the operators $\widehat{X}$ and $\widehat{P}$ are such that%
\[
\lbrack\widehat{X},\widehat{P}]=\widehat{X}\widehat{P}-\widehat{P}\widehat
{X}=i\hbar.
\]
then $[\widehat{X}^{m},\widehat{P}^{n}]$ is independent of the choice of
quantization; in fact (see Crehan \cite{cr89} and the references therein):%
\begin{equation}
\lbrack\widehat{X}^{m},\widehat{P}^{n}]=\sum_{k=1}^{\min(m,n)}(i\hbar)^{k}%
\begin{pmatrix}
m\\
k
\end{pmatrix}%
\begin{pmatrix}
n\\
k
\end{pmatrix}
\widehat{P}^{n-k}\widehat{X}^{m-k}. \label{crecomm}%
\end{equation}

In physics as well as in mathematics, the question of a \textquotedblleft
good\textquotedblright\ choice of quantization is more than just academic. For
instance, different choices may lead to different spectral properties. The
following example is due to Crehan \cite{cr89}. Consider the Hamiltonian
function%
\[
H(z)=\tfrac{1}{2}(p^{2}+x^{2})+\lambda(p^{2}+x^{2})^{3}.
\]
The term that gives an ordering problem is evidently $(p^{2}+x^{2})^{3}$;
Crehan then shows that the most general quantization invariant under the
symplectic transformation $(x,p)\longmapsto(p,-x)$ is
\[
\widehat{H}=\frac{1}{2}(\widehat{P}^{2}+\widehat{X}^{2})+\lambda(\widehat
{P}^{2}+\widehat{X}^{2})^{3}+\lambda(3\alpha\hbar^{2}-4)(\widehat{P}%
^{2}+\widehat{X}^{2}).
\]
The eigenfunctions of $\widehat{H}$ are those of the harmonic oscillator, and
the corresponding eigenvalues are the numbers%
\[
E_{N}=(N+\tfrac{1}{2})\hbar+\lambda\hbar(2N+1)^{3}+\lambda\hbar(2N+1)(3\alpha
\hbar^{2}-4)
\]
($N=0,1,2,...$) which clearly shows the dependence of the spectrum on the
parameters $\alpha$ and $\lambda$, and hence of the chosen quantization.

\section{Born--Jordan Quantization}

\subsection{First definition}

Let $z_{0}=(x_{0},p_{0})$ and consider the \textquotedblleft
displacement\textquotedblright\ Hamiltonian function $H_{z_{0}}=\sigma
(z,z_{0})$. The flow determined by the corresponding Hamilton equations is
given by $f_{t}(z)=z+tz_{0}$; for $\Psi\in\mathcal{S}^{\prime}(\mathbb{R}%
^{2n})$ we define $T(z_{0})\Psi(z)=(f_{1})^{\ast}\Psi(z)=\Psi(z-z_{0})$. The
$\tau$-quantization of $H_{z_{0}}$ is the operator $\widehat{H}_{z_{0}}%
=\sigma(\widehat{Z},z_{0}),$ $\widehat{Z}=(\widehat{X},\widehat{P})$; the
solution of the corresponding Schr\"{o}dinger equation at time $t=1$ with
initial condition $\psi$ is given by the Heisenberg operator $\widehat
{T}(z_{0})=e^{\frac{i}{\hbar}\sigma(\widehat{Z},z_{0})}$; its action on
$\psi\in\mathcal{S}^{\prime}(\mathbb{R}^{n})$ is explicitly given by
\begin{equation}
\widehat{T}(z_{0})\psi(x)=e^{\frac{i}{\hbar}(p_{0}x_{0}-\frac{1}{2}p_{0}%
x_{0})}\psi(x-x_{0}). \label{hw}%
\end{equation}
Let $a\in\mathcal{S}^{\prime}(\mathbb{R}^{2n})$ be an observable (or
\textquotedblleft symbol\textquotedblright). By definition, the Weyl
correspondence $a\overset{\text{\textrm{Weyl}}}{\longleftrightarrow
}A_{\mathrm{W}}$ is defined by%
\begin{equation}
A_{\mathrm{W}}\psi=\left(  \tfrac{1}{2\pi\hbar}\right)  ^{n}\int
_{\mathbb{R}^{2n}}a_{\sigma}(z)\widehat{T}(z)\psi dz \label{ahat}%
\end{equation}
where $a_{\sigma}=\mathcal{F}_{\sigma}a$ is the symplectic Fourier transform
of $a$, that is
\begin{equation}
a_{\sigma}(z)=\left(  \tfrac{1}{2\pi\hbar}\right)  ^{n}\langle e^{-\frac
{i}{\hbar}\sigma(z,\cdot)},\rangle; \label{asigmadis}%
\end{equation}
for $a\in\mathcal{S}(\mathbb{R}^{2n})$; informally
\begin{equation}
a_{\sigma}(z)=\left(  \tfrac{1}{2\pi\hbar}\right)  ^{n}\int_{\mathbb{R}^{2n}%
}e^{-\frac{i}{\hbar}\sigma(z,z^{\prime})}a(z^{\prime})dz^{\prime}.
\label{asig}%
\end{equation}
The symplectic Fourier transform $\mathcal{F}_{\sigma}$ is an involution
($\mathcal{F}_{\sigma}^{2}=I_{\mathrm{d}}$); it is related to the usual
Fourier transform $\mathcal{F}$ on $\mathbb{R}^{2n}$ by the formula
$\mathcal{F}_{\sigma}a(z)=\mathcal{F}a(Jz)$, $J$ the standard symplectic
matrix. The action of the operator $A_{\mathrm{W}}$ on a function $\psi
\in\mathcal{S}^{\prime}(\mathbb{R}^{n})$ is given by
\begin{equation}
A_{\mathrm{W}}\psi=\left(  \tfrac{1}{2\pi\hbar}\right)  ^{n}\int
_{\mathbb{R}^{2n}}a_{\sigma}(z_{0})\widehat{T}(z_{0})\psi dz_{0}. \label{apsi}%
\end{equation}

How can we modify this formula to define Born--Jordan quantization? An
apparently easy answer would be to first define $\tau$-quantization by
replacing $\widehat{H}_{z_{0}}$ by its $\tau$-quantized version $\widehat
{H}_{z_{0},\tau}$, and then to average the associated operators $\widehat
{T}_{\tau}(z_{0})$ thus obtained to get a \textquotedblleft$\widehat
{T}_{\mathrm{BJ}}(z_{0})$ operator\textquotedblright\ which would allow to
define $\widehat{A}_{\mathrm{BJ}}$. However, such a procedure trivially fails,
because all $\tau$-quantizations of the displacement Hamiltonian $H_{z_{0}}$
coincide with $\widehat{H}_{z_{0}}$ as can be verified using the polynomial
rule (\ref{taumn}). There is however a simple way out of this difficulty; it
consists in replacing, as we did in \cite{golu1}, $\widehat{T}(z_{0})$ by
$\Theta(z_{0})\widehat{T}(z_{0})$ where
\begin{equation}
\Theta(z_{0})=\frac{\sin(p_{0}x_{0}/2\hbar)}{p_{0}x_{0}/2\hbar} \label{tbj1}%
\end{equation}
and we define the Born--Jordan operator $A_{\mathrm{BJ}}$ by
\begin{equation}
A_{\mathrm{BJ}}\psi=\left(  \tfrac{1}{2\pi\hbar}\right)  ^{n}\int
_{\mathbb{R}^{2n}}a_{\sigma}(z)\Theta(z)\widehat{T}(z_{0})\psi dz.
\label{abj0}%
\end{equation}
This formula will be justified below.

\subsection{Pseudo-differential formulation}

There is another way to describe Born--Jordan quantization. Writing formula
(\ref{apsi}) in pseudo-differential form yields the usual formal expression%
\begin{equation}
A_{\mathrm{W}}\psi(x)=\left(  \tfrac{1}{2\pi\hbar}\right)  ^{n}\int
_{\mathbb{R}^{2n}}e^{\frac{i}{\hbar}p(x-y)}a(\tfrac{1}{2}(x+y),p)\psi(y)dpdy
\label{ahat1}%
\end{equation}
for the Weyl correspondence (we assume for simplicity that $a\in
\mathcal{S}(\mathbb{R}^{2n})$ and $\psi\in\mathcal{S}(\mathbb{R}^{n})$). We
now define the $\tau$-dependent operator \emph{\`{a} la }Shubin\emph{
}\cite{sh87}:%
\begin{equation}
A_{\tau}\psi(x)=\left(  \tfrac{1}{2\pi\hbar}\right)  ^{n}\int_{\mathbb{R}%
^{2n}}e^{\frac{i}{\hbar}p(x-y)}a(\tau x+(1-\tau)y),p)\psi(y)dpdy;
\label{atau1}%
\end{equation}
the Born--Jordan operator $A_{\mathrm{BJ}}$ with symbol $a$ is then defined by
the average
\begin{equation}
A_{\mathrm{BJ}}\psi=\int_{0}^{1}A_{\tau}\psi d\tau\label{abj}%
\end{equation}
which we can write, interchanging the order of the integrations,%
\begin{equation}
A_{\mathrm{BJ}}\psi(x)=\left(  \tfrac{1}{2\pi\hbar}\right)  ^{n}%
\int_{\mathbb{R}^{2n}}e^{\frac{i}{\hbar}p(x-y)}a_{\mathrm{BJ}}(x,y,p)\psi
(y)dpdy \label{abj1}%
\end{equation}
where%
\begin{equation}
a_{\mathrm{BJ}}(x,y,p)=\int_{0}^{1}a(\tau x+(1-\tau)y,p)d\tau. \label{abjsymb}%
\end{equation}
We have been a little bit sloppy in writing the (usually divergent) integrals
above, but all three definitions become rigorous if we view the operators
$A_{\mathrm{W}}$, $A_{\tau}$, and $A_{\mathrm{BJ}}$ as being defined by the
distributional kernels%
\begin{equation}
K(x,y)=\left(  \tfrac{1}{2\pi\hbar}\right)  ^{n/2}(\mathcal{F}_{2}%
^{-1}a)(\tfrac{1}{2}(x+y),p) \label{ker}%
\end{equation}%
\begin{equation}
K_{\tau}(x,y)=\left(  \tfrac{1}{2\pi\hbar}\right)  ^{n/2}(\mathcal{F}_{2}%
^{-1}a)((\tau x+1-\tau)y,p) \label{kertau}%
\end{equation}
($\mathcal{F}_{2}^{-1}$ is the inverse partial Fourier transform with respect
to the second set of variables) and%
\begin{equation}
K_{\mathrm{BJ}}(x,y)=\int_{0}^{1}K_{\tau}(x,y)d\tau\label{kerbj}%
\end{equation}
where $\mathcal{F}_{2}^{-1}$ is the inverse Fourier transform in the second
set of variables. We will give below an alternative rigorous definition, but
let us first check that definition (\ref{abj1})--(\ref{kerbj}) coincides with
the one given in previous subsection. Define the modified Heisenberg--Weyl
operators
\begin{equation}
\widehat{T}_{\tau}(z_{0})\psi(x)=e^{\frac{i}{2\hbar}(2\tau-1)p_{0}x_{0}%
}\widehat{T}(z_{0})\psi(x) \label{hopt}%
\end{equation}
that is
\begin{equation}
\widehat{T}_{\tau}(z_{0})\psi(x)=e^{\frac{i}{\hbar}(p_{0}x-(1-\tau)p_{0}%
x_{0})}\psi(x-x_{0}). \label{hoptbis}%
\end{equation}
These obey the same commutation rules%
\begin{equation}
\widehat{T}_{\tau}(z_{0})\widehat{T}_{\tau}(z_{1})=e^{\frac{i}{\hbar}%
\sigma(z_{0},z_{1})}\widehat{T}_{\tau}(z_{1})\widehat{T}_{\tau}(z_{0})
\label{comm}%
\end{equation}
as the usual Heisenberg operators $\widehat{T}(z_{0})$.

\begin{proposition}
\label{propobj}Let $a\in\mathcal{S}^{\prime}(\mathbb{R}^{2n})$, $\psi
\in\mathcal{S}(\mathbb{R}^{n})$. The Born--Jordan operator (\ref{abj}) is
given by formula (\ref{abj0}), that is
\begin{equation}
A_{\mathrm{BJ}}\psi=\left(  \tfrac{1}{2\pi\hbar}\right)  ^{n}\int
_{\mathbb{R}^{2n}}a_{\sigma}(z)\widehat{T}_{\mathrm{BJ}}(z)\psi dz
\label{abjobis}%
\end{equation}
with%
\begin{equation}
\widehat{T}_{\mathrm{BJ}}(z)=\Theta(z)\widehat{T}(z)\text{ \ , \ }%
\Theta(z)=\frac{\sin(px/2\hbar)}{px/2\hbar}. \label{tbj}%
\end{equation}
In particular, $A_{\mathrm{BJ}}$ is the Weyl operator with symbol
\begin{equation}
a_{\mathrm{BJ}}=\left(  \tfrac{1}{2\pi\hbar}\right)  ^{n}a\ast\mathcal{F}%
_{\sigma}\Theta\label{b}%
\end{equation}

\end{proposition}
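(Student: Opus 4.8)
The plan is to establish first a harmonic-analysis representation of each Shubin $\tau$-operator $A_{\tau}$ in terms of the modified Heisenberg--Weyl operators $\widehat{T}_{\tau}(z_{0})$ of (\ref{hoptbis}), namely
\begin{equation}
A_{\tau}\psi=\left(\tfrac{1}{2\pi\hbar}\right)^{n}\int_{\mathbb{R}^{2n}}a_{\sigma}(z_{0})\,\widehat{T}_{\tau}(z_{0})\psi\,dz_{0},\tag{$\ast$}
\end{equation}
and then simply to integrate this identity over $\tau\in[0,1]$. For $\tau=\tfrac{1}{2}$ the operators $\widehat{T}_{\tau}$ reduce to the ordinary Heisenberg--Weyl operators and $(\ast)$ reduces to the defining Weyl formula (\ref{apsi}); thus $(\ast)$ is the natural $\tau$-analogue of (\ref{apsi}), and the whole proposition should follow by tracking a single scalar phase factor.

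To prove $(\ast)$ I would start from its right-hand side, insert the defining integral (\ref{asig}) for $a_{\sigma}$ together with the explicit action (\ref{hoptbis}) of $\widehat{T}_{\tau}(z_{0})$, and carry out the $p_{0}$-integration first. Writing $z_{0}=(x_{0},p_{0})$ and $z'=(x',p')$, the phases combine into $\exp\!\big(\tfrac{i}{\hbar}[\,p_{0}(x-x'-(1-\tau)x_{0})+p'x_{0}\,]\big)$, so the $p_{0}$-integral yields $(2\pi\hbar)^{n}\delta(x-x'-(1-\tau)x_{0})$. Evaluating the $x'$-integral against this delta and then substituting $y=x-x_{0}$ turns the argument $x-(1-\tau)x_{0}$ into $\tau x+(1-\tau)y$, leaving exactly the Shubin integral (\ref{atau1}). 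The one point requiring care is that these are the usual formally divergent oscillatory integrals, so the manipulation should be read at the level of the distributional kernels (\ref{kertau}), or justified by continuity from $a\in\mathcal{S}(\mathbb{R}^{2n})$ to $a\in\mathcal{S}^{\prime}(\mathbb{R}^{2n})$; this is the only genuinely delicate step.

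Granting $(\ast)$, I would integrate over $\tau$ and interchange the order of integration, so that by definition (\ref{abj})
\[
A_{\mathrm{BJ}}\psi=\left(\tfrac{1}{2\pi\hbar}\right)^{n}\int_{\mathbb{R}^{2n}}a_{\sigma}(z)\Big(\int_{0}^{1}\widehat{T}_{\tau}(z)\,d\tau\Big)\psi\,dz.
\]
By (\ref{hopt}) the sole $\tau$-dependence of $\widehat{T}_{\tau}(z)$ sits in the scalar factor $e^{\frac{i}{2\hbar}(2\tau-1)px}$, and the elementary integral (substitute $u=2\tau-1$)
\[
\int_{0}^{1}e^{\frac{i}{2\hbar}(2\tau-1)px}\,d\tau=\frac{\sin(px/2\hbar)}{px/2\hbar}=\Theta(z)
\]
gives $\int_{0}^{1}\widehat{T}_{\tau}(z)\,d\tau=\Theta(z)\widehat{T}(z)=\widehat{T}_{\mathrm{BJ}}(z)$, which is precisely (\ref{abjobis})--(\ref{tbj}).

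For the final assertion I would read (\ref{abjobis}) as a Weyl integral of the form (\ref{apsi}) whose symplectic Fourier symbol is the product $a_{\sigma}\Theta$ in place of $a_{\sigma}$. Since $\mathcal{F}_{\sigma}$ is an involution, the corresponding Weyl (hence Born--Jordan) symbol is $\mathcal{F}_{\sigma}(a_{\sigma}\Theta)$; applying the convolution rule $\mathcal{F}_{\sigma}(fg)=\left(\tfrac{1}{2\pi\hbar}\right)^{n}\mathcal{F}_{\sigma}f*\mathcal{F}_{\sigma}g$ together with $\mathcal{F}_{\sigma}a_{\sigma}=a$ then gives $a_{\mathrm{BJ}}=\left(\tfrac{1}{2\pi\hbar}\right)^{n}a*\mathcal{F}_{\sigma}\Theta$, which is (\ref{b}). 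Here $\Theta$ is bounded but not integrable, so $\mathcal{F}_{\sigma}\Theta$ and the convolution must be understood in the tempered-distribution sense, which is routine given the setup already in place.
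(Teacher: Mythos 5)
Your proposal is correct and follows essentially the same route as the paper's own proof: establish the representation $A_{\tau}\psi=\left(\tfrac{1}{2\pi\hbar}\right)^{n}\int a_{\sigma}(z)\widehat{T}_{\tau}(z)\psi\,dz$, average over $\tau\in[0,1]$ using the elementary integral $\int_{0}^{1}e^{\frac{i}{2\hbar}(2\tau-1)px}d\tau=\Theta(z)$, and then obtain (\ref{b}) from the convolution rule for $\mathcal{F}_{\sigma}$ together with its involutivity. The only differences are that you spell out the ``straightforward computation'' (the $p_{0}$-integration producing the delta function) which the paper merely asserts, and that you carry the normalization $\left(\tfrac{1}{2\pi\hbar}\right)^{n}$ which the paper's displayed formula (\ref{ataupsi}) drops, evidently a typo there.
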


\begin{proof}
(\textit{Cf}. \cite{golu1,transam}). One verifies by a straightforward
computation that the Shubin formula (\ref{atau1}) can be rewritten as%
\begin{equation}
A_{\tau}\psi=\int_{\mathbb{R}^{2n}}a_{\sigma}(z)\widehat{T}_{\tau}(z)\psi dz.
\label{ataupsi}%
\end{equation}
Let us now average in $\tau$ over the interval $[0,1]$; interchanging the
order of integrations and using the trivial identity%
\[
\int_{0}^{1}e^{\frac{i}{2\hbar}(2\tau-1)px}d\tau=\frac{2\hbar}{px}\sin
\frac{px}{2\hbar}%
\]
we get
\[
\widehat{T}_{\mathrm{BJ}}(z)=\int_{0}^{1}\widehat{T}_{\tau}(z)d\tau
=\Theta(z)\widehat{T}(z)
\]
hence formula (\ref{abjobis}). To prove the last statement we note that
formula (\ref{abjobis}) can be rewritten%
\begin{align*}
A_{\mathrm{BJ}}\psi &  =\left(  \tfrac{1}{2\pi\hbar}\right)  ^{n}%
\int_{\mathbb{R}^{2n}}a_{\sigma}(z)\Theta(z)\widehat{T}(z)\psi dz\\
&  =\left(  \tfrac{1}{2\pi\hbar}\right)  ^{n}\int_{\mathbb{R}^{2n}%
}(a_{\mathrm{BJ}})_{\sigma}(z)\widehat{T}(z)\psi dz
\end{align*}
where $(a_{\mathrm{BJ}})_{\sigma}=a\Theta$. Taking the inverse Fourier
transform we get, noting that $\mathcal{F}_{\sigma}(a\ast b)=(2\pi\hbar
)^{n}\mathcal{F}_{\sigma}a\mathcal{F}_{\sigma}b$,%
\[
b=\left(  \tfrac{1}{2\pi\hbar}\right)  ^{n}a\ast\mathcal{F}_{\sigma}%
^{-1}\Theta
\]
hence (\ref{b}) since $\mathcal{F}_{\sigma}^{-1}\Theta=\mathcal{F}_{\sigma
}\Theta$ because the function $\Theta$ is even.
\end{proof}

One easily verifies that the (formal) adjoint of $A_{\tau}=\operatorname*{Op}%
_{\tau}(a)$ is given by
\begin{equation}
\operatorname*{Op}\nolimits_{\tau}(a)^{\ast}=\operatorname*{Op}%
\nolimits_{1-\tau}(\overline{a}) \label{opadj}%
\end{equation}
and hence
\begin{equation}
\operatorname*{Op}\nolimits_{\mathrm{BJ}}(a)^{\ast}=\operatorname*{Op}%
\nolimits_{\mathrm{BJ}}(\overline{a}). \label{bjadj}%
\end{equation}
Born--Jordan operators thus share with Weyl operators the property of being
(essentially) self-adjoint if and only if their symbol is real. This property
makes Born--Jordan prescription a good candidate for physical quantization,
while Shubin quantization should be rejected being unphysical for $\tau
\neq\frac{1}{2}$.

\section{The Born--Jordan--Wigner Distribution}

\subsection{The $\tau$-Wigner distribution}

In a recent series of papers Boggiatto and his collaborators
\cite{bogetal,bogetalbis,bogetalter} have introduced a $\tau$-dependent Wigner
distribution $\operatorname*{Wig}\nolimits_{\tau}(f,g)$ which they average
over the values of $\tau$ in the interval $[0,1]$. This procedure leads to an
element of the Cohen class \cite{Gro}, \textit{i.e.} to a transform of the
type $C(\psi,\phi)=\operatorname*{Wig}\nolimits_{\tau}(\psi,\phi)\ast\theta$
where $\theta\in\mathcal{S}^{\prime}(\mathbb{R}^{2n})$. From the point of view
of time-frequency analysis this can be interpreted as the application of a
filter to the Wigner transform.

Let us define the $\tau$-Wigner cross-distribution $\operatorname*{Wig}%
\nolimits_{\tau}(\psi,\phi)$ of a pair $(\psi,\phi)$ of functions in
$\mathcal{S}(\mathbb{R}^{n})$:
\begin{equation}
\operatorname*{Wig}\nolimits_{\tau}(\psi,\phi)(z)=\left(  \tfrac{1}{2\pi\hbar
}\right)  ^{n}\int_{\mathbb{R}^{n}}e^{-\frac{i}{\hbar}py}\psi(x+\tau
y)\overline{\phi}(x-(1-\tau)y)dy. \label{wtau}%
\end{equation}
Choosing $\tau=\frac{1}{2}$ one recovers the usual cross-Wigner transform
\begin{equation}
\operatorname*{Wig}(\psi,\phi)(z)=\left(  \tfrac{1}{2\pi\hbar}\right)
^{n}\int_{\mathbb{R}^{n}}e^{-\frac{i}{\hbar}py}\psi(x+\tfrac{1}{2}%
y)\overline{\phi}(x-\tfrac{1}{2}y)dy \label{wigo1}%
\end{equation}
and when $\tau=0$ we get the Rihaczek--Kirkwood distribution%
\begin{equation}
R(\psi,\phi)(z)=\left(  \tfrac{1}{2\pi\hbar}\right)  ^{n/2}e^{-\frac{i}{\hbar
}px}\psi(x)\overline{\mathcal{F}\phi}(p) \label{riki}%
\end{equation}
well-known from time-frequency analysis \cite{Gro}.

The mapping $\operatorname*{Wig}\nolimits_{\tau}$ is a bilinear and continuous
mapping $\mathcal{S}(\mathbb{R}^{n})\times\mathcal{S}(\mathbb{R}%
^{n})\longrightarrow\mathcal{S}(\mathbb{R}^{2n})$. When $\psi=\phi$ one writes
$\operatorname*{Wig}\nolimits_{\tau}(\psi,\psi)=\operatorname*{Wig}%
\nolimits_{\tau}\psi$; it is the $\tau$-Wigner distribution considered by
Boggiatto \textit{et al}. \cite{bogetal,bogetalbis,bogetalter}). It follows
from the definition of $\operatorname*{Wig}\nolimits_{\tau}$ that we have
\begin{equation}
\overline{\operatorname*{Wig}\nolimits_{\tau}(\phi,\psi)}=\operatorname*{Wig}%
\nolimits_{1-\tau}(\psi,\phi); \label{wtauconj1}%
\end{equation}
in particular%
\begin{equation}
\overline{\operatorname*{Wig}\nolimits_{\tau}\psi}=\operatorname*{Wig}%
\nolimits_{1-\tau}\psi\label{wtauconj2}%
\end{equation}
hence $\operatorname*{Wig}\nolimits_{\tau}\psi$ is not a real function in
general if $\tau\neq\frac{1}{2}$.

\begin{proposition}
Assume that $\psi,\phi\in L^{1}(\mathbb{R}^{n})\cap L^{2}(\mathbb{R}^{n})$.
Then%
\begin{equation}
\int_{\mathbb{R}^{n}}\operatorname*{Wig}\nolimits_{\tau}(\psi,\phi
)(z)dp=\psi(x)\overline{\phi}(x) \label{cromarg1}%
\end{equation}
and%
\begin{equation}
\int_{\mathbb{R}^{n}}\operatorname*{Wig}\nolimits_{\tau}(\psi,\phi
)(z)dx=\mathcal{F}\psi(p)\overline{\mathcal{F}\phi(p)}. \label{cromarg2}%
\end{equation}

\end{proposition}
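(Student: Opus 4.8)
The plan is to treat both identities as straightforward exercises in Fourier analysis, built on the observation that, for each fixed $x$, the map $p\mapsto\operatorname*{Wig}\nolimits_{\tau}(\psi,\phi)(x,p)$ is, up to a normalizing power of $2\pi\hbar$, the Fourier transform in the auxiliary variable $y$ of the function $F_{x}(y)=\psi(x+\tau y)\overline{\phi}(x-(1-\tau)y)$. Throughout I would use the $\hbar$-normalized transform $\mathcal{F}\psi(p)=(2\pi\hbar)^{-n/2}\int e^{-\frac{i}{\hbar}px}\psi(x)\,dx$ implicit in (\ref{riki}), and keep careful track of the powers of $2\pi\hbar$ at each step, since these cancellations are exactly what make the right-hand sides emerge without spurious constants.

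For the momentum marginal (\ref{cromarg2}) I would integrate in $x$ and perform the linear change of variables $(x,y)\mapsto(u,v)=(x+\tau y,\,x-(1-\tau)y)$, whose inverse is $x=(1-\tau)u+\tau v$, $y=u-v$, and whose Jacobian determinant is $\pm1$, so that $dx\,dy=du\,dv$. The phase $e^{-\frac{i}{\hbar}py}$ becomes $e^{-\frac{i}{\hbar}p(u-v)}$ and the integrand factors as $(e^{-\frac{i}{\hbar}pu}\psi(u))(e^{\frac{i}{\hbar}pv}\overline{\phi}(v))$. Here the hypothesis $\psi,\phi\in L^{1}$ is decisive: the double integral is absolutely convergent, since $\int\!\int|\psi(u)||\phi(v)|\,du\,dv=\|\psi\|_{1}\|\phi\|_{1}<\infty$, so Fubini applies without comment and the integral splits into the product of $\int e^{-\frac{i}{\hbar}pu}\psi(u)\,du=(2\pi\hbar)^{n/2}\mathcal{F}\psi(p)$ and $\int e^{\frac{i}{\hbar}pv}\overline{\phi}(v)\,dv=(2\pi\hbar)^{n/2}\overline{\mathcal{F}\phi}(p)$. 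Multiplying by the prefactor $(2\pi\hbar)^{-n}$ produces exactly (\ref{cromarg2}).

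The position marginal (\ref{cromarg1}) is where the only genuine subtlety lies, and I expect it to be the main obstacle. Integrating in $p$, one would like to invoke $\int e^{-\frac{i}{\hbar}py}\,dp=(2\pi\hbar)^{n}\delta(y)$ to collapse the $y$-integral onto $y=0$, recovering $F_{x}(0)=\psi(x)\overline{\phi}(x)$. The difficulty is that, in contrast to (\ref{cromarg2}), the double integral in $(y,p)$ is \emph{not} absolutely convergent (integrating $|F_{x}(y)|$ over all $p$ diverges), so Fubini is unavailable and the $\delta$-manipulation is purely formal. The clean rigorous route is to recognize $\int_{\mathbb{R}^{n}}\operatorname*{Wig}\nolimits_{\tau}(\psi,\phi)(x,p)\,dp=(2\pi\hbar)^{-n/2}\int\widehat{F_{x}}(p)\,dp$ and to apply Fourier inversion, which returns $(2\pi\hbar)^{n/2}F_{x}(0)$; after the constants cancel this yields $F_{x}(0)=\psi(x)\overline{\phi}(x)$.

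To legitimize that inversion step under the stated hypotheses, I would note that $L^{2}$ guarantees $F_{x}\in L^{1}(dy)$ for every $x$ via Cauchy--Schwarz, namely $\|F_{x}\|_{1}\le\tau^{-n/2}(1-\tau)^{-n/2}\|\psi\|_{2}\|\phi\|_{2}$ for $\tau\in(0,1)$ (the endpoint cases $\tau=0,1$ being the Rihaczek--Kirkwood distribution (\ref{riki}), whose marginals one checks directly), while $L^{1}$ is what makes $\mathcal{F}\psi$, $\mathcal{F}\phi$ bounded continuous functions so that the right-hand sides of both identities are well defined. The inversion itself I would justify either by first proving everything for Schwartz $\psi,\phi$ — where both computations are immediate — and then passing to the $L^{1}\cap L^{2}$ class by density, or by a Gaussian regularization $e^{-\varepsilon|p|^{2}}$ of the $p$-integral followed by dominated convergence as $\varepsilon\to0$. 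Either device converts the formal $\delta$-argument into a rigorous one, completing (\ref{cromarg1}).
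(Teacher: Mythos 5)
Your computation of the momentum marginal (\ref{cromarg2}) is exactly the paper's proof: the same substitution $x'=x+\tau y$, $x''=x-(1-\tau)y$ with unit Jacobian, after which the phase splits and the double integral factors into $(2\pi\hbar)^{n}\mathcal{F}\psi(p)\overline{\mathcal{F}\phi(p)}$ against the prefactor $(2\pi\hbar)^{-n}$; your only addition is the explicit Fubini justification via $\|\psi\|_{1}\|\phi\|_{1}<\infty$, which the paper leaves tacit. Where you part company with the paper is (\ref{cromarg1}): the paper disposes of it with the single sentence ``Formula (\ref{cromarg1}) is straightforward,'' whereas you correctly observe that it is in fact the delicate one, because for fixed $x$ the function $p\mapsto\operatorname*{Wig}\nolimits_{\tau}(\psi,\phi)(x,p)$ is (up to a constant) the Fourier transform of $F_{x}(y)=\psi(x+\tau y)\overline{\phi}(x-(1-\tau)y)\in L^{1}(dy)$, and its integral over $p$ is \emph{not} absolutely convergent in general (take $\psi=\phi$ an indicator function: one gets sinc-type behavior in $p$), so the identity is genuinely an instance of Fourier inversion and needs a summability device. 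One caution on your two proposed repairs: the Gaussian regularization is the one that closes the argument, since $\int e^{-\varepsilon|p|^{2}}\operatorname*{Wig}\nolimits_{\tau}(\psi,\phi)(x,p)\,dp$ equals the convolution of $F_{x}$ with a Gaussian approximate identity evaluated at $y=0$, which tends to $F_{x}(0)=\psi(x)\overline{\phi}(x)$ at every Lebesgue point, hence for a.e.\ $x$ (for all $x$ if $\psi,\phi$ are continuous). The density route, by contrast, does not work by itself: $\operatorname*{Wig}\nolimits_{\tau}(\psi_{k},\phi_{k})\to\operatorname*{Wig}\nolimits_{\tau}(\psi,\phi)$ only uniformly (with sup-norm controlled by $L^{2}$ norms), and uniform convergence cannot be pushed through a $p$-integral over all of $\mathbb{R}^{n}$ that is not absolutely convergent, so the Schwartz-case identity does not transfer to $L^{1}\cap L^{2}$ by a limit argument alone. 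With (\ref{cromarg1}) read in that regularized sense and almost everywhere, your proof is complete, and it is more candid than the paper's about where the analytic content actually lies.
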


\begin{proof}
Formula (\ref{cromarg1}) is straightforward. On the other hand%
\[
\int_{\mathbb{R}^{n}}\operatorname*{Wig}\nolimits_{\tau}(\psi,\phi
)(z)dx=\left(  \tfrac{1}{2\pi\hbar}\right)  ^{n}\int_{\mathbb{R}^{2n}%
}e^{-\frac{i}{\hbar}py}\psi(x+\tau y)\overline{\phi}(x-(1-\tau)y)dxdy
\]
and setting $x^{\prime}=x+\tau y$, $x^{\prime\prime}=x-(1-\tau)y$ we have
$dx^{\prime}dx^{\prime\prime}=dxdy$ so that%
\[
\int_{\mathbb{R}^{n}}\operatorname*{Wig}\nolimits_{\tau}(\psi,\phi
)(z)dx=\left(  \tfrac{1}{2\pi\hbar}\right)  ^{n}\int_{\mathbb{R}^{2n}%
}e^{-\frac{i}{\hbar}px^{\prime}}\psi(x^{\prime})e^{\frac{i}{\hbar}%
px^{\prime\prime}}\overline{\phi}(x^{\prime\prime})dxdy
\]
hence formula (\ref{cromarg2}). Notice that the right-hand sides of
(\ref{cromarg1}) and (\ref{cromarg2}) are independent of the parameter $\tau$.
\end{proof}

In particular \cite{bogetal}, the $\tau$-Wigner distribution
$\operatorname*{Wig}\nolimits_{\tau}\psi=\operatorname*{Wig}\nolimits_{\tau
}(\psi,\psi)$ satisfies the usual marginal properties:%
\begin{equation}
\int_{\mathbb{R}^{n}}\operatorname*{Wig}\nolimits_{\tau}\psi(z)dp=|\psi
(x)|^{2}\text{ ,\ }\int_{\mathbb{R}^{n}}\operatorname*{Wig}\nolimits_{\tau
}\psi(z)dx=|\mathcal{F}\psi(p)|^{2}.\text{\ } \label{marginal1}%
\end{equation}

There is a fundamental relation between Weyl pseudo-differential operators and
the cross-Wigner transform, that relation is often used to define the Weyl
operator $A_{\mathrm{W}}=\operatorname*{Op}_{\mathrm{W}}(a)$:
\begin{equation}
\langle A_{\mathrm{W}}\psi|\overline{\phi}\rangle=\langle
a,\operatorname*{Wig}(\psi,\phi)\rangle\label{aweyl}%
\end{equation}
for $\psi,\phi\in\mathcal{S}(\mathbb{R}^{n})$. Not very surprisingly this
formula extends to the case of $\tau$-operators:

\begin{proposition}
\label{propos1}Let $\psi,\phi\in\mathcal{S}(\mathbb{R}^{n})$, $a\in
\mathcal{S}(\mathbb{R}^{2n})$, and $\tau$ a real number. We have%
\begin{equation}
\langle A_{\tau}\psi|\phi\rangle=\langle a,\operatorname*{Wig}\nolimits_{\tau
}(\psi,\phi)\rangle\label{awtau}%
\end{equation}
where $\langle\cdot,\cdot\rangle$ is the distributional bracket on
$\mathbb{R}^{2n}$ and $A_{\tau}=\operatorname*{Op}\nolimits_{\tau}(a)$.
\end{proposition}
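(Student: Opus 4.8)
The plan is to establish the identity for $a\in\mathcal{S}(\mathbb{R}^{2n})$ by a direct computation, reducing both sides to one and the same absolutely convergent triple integral and identifying them through a single affine change of variables. Since $a$ is a genuine Schwartz function (not merely a tempered distribution) and $\psi,\phi\in\mathcal{S}(\mathbb{R}^{n})$, the distributional bracket on the right is just the ordinary integral $\langle a,\operatorname*{Wig}\nolimits_{\tau}(\psi,\phi)\rangle=\int_{\mathbb{R}^{2n}}a(z)\operatorname*{Wig}\nolimits_{\tau}(\psi,\phi)(z)\,dz$, while the left-hand pairing is read as $\langle A_{\tau}\psi|\phi\rangle=\int_{\mathbb{R}^{n}}A_{\tau}\psi(x)\overline{\phi(x)}\,dx$. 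Every integrand below decays rapidly, so all integrals converge absolutely and Fubini's theorem applies without further comment.

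First I would expand the right-hand side. Inserting the definition (\ref{wtau}) of $\operatorname*{Wig}\nolimits_{\tau}$ gives
\[
\langle a,\operatorname*{Wig}\nolimits_{\tau}(\psi,\phi)\rangle=\left(\tfrac{1}{2\pi\hbar}\right)^{n}\int_{\mathbb{R}^{3n}}a(x,p)\,e^{-\frac{i}{\hbar}py}\,\psi(x+\tau y)\,\overline{\phi(x-(1-\tau)y)}\,dy\,dx\,dp.
\]
The crucial step is then the affine change of variables $u=x+\tau y$, $v=x-(1-\tau)y$, whose inverse is $y=u-v$ and $x=(1-\tau)u+\tau v$. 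A short computation shows that the map $(u,v)\mapsto(x,y)$ has Jacobian determinant $-1$ in each of the $n$ coordinate blocks, so $dx\,dy=du\,dv$, and the integral becomes
\[
\left(\tfrac{1}{2\pi\hbar}\right)^{n}\int_{\mathbb{R}^{3n}}a((1-\tau)u+\tau v,p)\,e^{-\frac{i}{\hbar}p(u-v)}\,\psi(u)\,\overline{\phi(v)}\,du\,dv\,dp.
\]

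Finally I would recognize this as the left-hand side. Writing $e^{-\frac{i}{\hbar}p(u-v)}=e^{\frac{i}{\hbar}p(v-u)}$ and $(1-\tau)u+\tau v=\tau v+(1-\tau)u$, the inner integral over $u$ and $p$ is \emph{exactly} $A_{\tau}\psi(v)$ as given by the Shubin formula (\ref{atau1}) (with output variable $v$ and integration variable $u$), and the remaining integral against $\overline{\phi(v)}$ yields $\int_{\mathbb{R}^{n}}A_{\tau}\psi(v)\overline{\phi(v)}\,dv=\langle A_{\tau}\psi|\phi\rangle$, closing the argument. I do not expect a genuine obstacle here: the entire content is the affine substitution, and the only points needing care are verifying that its Jacobian is unimodular and that the transported phase $e^{-\frac{i}{\hbar}p(u-v)}$ and argument $(1-\tau)u+\tau v$ of $a$ match precisely the phase $e^{\frac{i}{\hbar}p(x-y)}$ and argument $\tau x+(1-\tau)y$ appearing in (\ref{atau1}). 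Should one wish the statement for $a\in\mathcal{S}^{\prime}(\mathbb{R}^{2n})$, it would then follow by continuity and duality, since $\operatorname*{Wig}\nolimits_{\tau}$ maps $\mathcal{S}(\mathbb{R}^{n})\times\mathcal{S}(\mathbb{R}^{n})$ continuously into $\mathcal{S}(\mathbb{R}^{2n})$.
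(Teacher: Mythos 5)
Your proposal is correct and follows essentially the same route as the paper's own proof: both expand $\langle a,\operatorname*{Wig}\nolimits_{\tau}(\psi,\phi)\rangle$ via the definition (\ref{wtau}), perform the same affine substitution $u=x+\tau y$, $v=x-(1-\tau)y$ (the paper's $y^{\prime},x^{\prime}$), and recognize the resulting inner integral as the Shubin formula (\ref{atau1}) paired against $\overline{\phi}$. The only differences are cosmetic: you make the unimodularity of the Jacobian and the Fubini justification explicit, and you add the (correct) closing remark that the case $a\in\mathcal{S}^{\prime}(\mathbb{R}^{2n})$ follows by duality, which the paper treats in the paragraph after the proposition.
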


\begin{proof}
By definition of $\operatorname*{Wig}\nolimits_{\tau}$ we have%
\begin{multline*}
\langle a,\operatorname*{Wig}\nolimits_{\tau}(\psi,\phi)\rangle=\\
\left(  \tfrac{1}{2\pi\hbar}\right)  ^{n}\int_{\mathbb{R}^{3n}}e^{-\frac
{i}{\hbar}py}a(z)\psi(x+\tau y)\overline{\phi}(x-(1-\tau)y)dydpdx.
\end{multline*}
Defining new variables $x^{\prime}=x-(1-\tau)y$ and $y^{\prime}=x+\tau y$ we
have $y=y^{\prime}-x^{\prime}$, $dydx=dy^{\prime}dx^{\prime}$ and hence
\begin{multline*}
\langle a,\operatorname*{Wig}\nolimits_{\tau}(\psi,\phi)\rangle=\\
\left(  \tfrac{1}{2\pi\hbar}\right)  ^{n}\int_{\mathbb{R}^{3n}}e^{-\frac
{i}{\hbar}p(x^{\prime}-y^{\prime})}a(\tau x^{\prime}+(1-\tau)y^{\prime}%
,p)\psi(y^{\prime})\overline{\phi}(x^{\prime})dy^{\prime}dp^{\prime}%
dx^{\prime};
\end{multline*}
the equality (\ref{awtau}) follows in view of definition (\ref{atau1}) of
$A_{\tau}$.
\end{proof}

Formula (\ref{awtau}) allows us to define $\widehat{A}_{\tau}\psi
=\operatorname*{Op}\nolimits_{\tau}(a)\psi$ for arbitrary symbols
$a\in\mathcal{S}^{\prime}(\mathbb{R}^{2n})$ and $\psi\in\mathcal{S}%
(\mathbb{R}^{n})$ in the same way as is done for Weyl pseudo-differential
operators: choose $\phi\in\mathcal{S}(\mathbb{R}^{n})$; then
$\operatorname*{Wig}\nolimits_{\tau}(\psi,\phi)\in\mathcal{S}(\mathbb{R}%
^{2n})$ and the distributional bracket $\langle a,\operatorname*{Wig}%
\nolimits_{\tau}(\psi,\phi)\rangle$ is thus well-defined. This defines
$\widehat{A}_{\tau}$ as a continuous operator $\mathcal{S}(\mathbb{R}%
^{n})\longrightarrow\mathcal{S}^{\prime}(\mathbb{R}^{n})$.

\subsection{Averaging over $\tau$}

We define the (cross) Born--Jordan--Wigner (BJW) distribution of $\psi,\phi
\in\mathcal{S}(\mathbb{R}^{n})$ by the formula%
\begin{equation}
\operatorname*{Wig}\nolimits_{\mathrm{BJ}}(\psi,\phi)(z)=\int_{0}%
^{1}\operatorname*{Wig}\nolimits_{\tau}(\psi,\phi)d\tau. \label{wbj}%
\end{equation}
We set $\operatorname*{Wig}\nolimits_{\mathrm{BJ}}\psi=\operatorname*{Wig}%
\nolimits_{\mathrm{BJ}}(\psi,\psi)$. The properties of the BJW distribution
are readily deduced from those of the $\tau$-Wigner distribution studied
above. In particular, the marginal properties (\ref{cromarg1}) and
(\ref{cromarg2}) are obviously preserved:%
\begin{equation}
\int_{\mathbb{R}^{n}}\operatorname*{Wig}\nolimits_{\mathrm{BJ}}(\psi
,\phi)(z)dp=\psi(x)\overline{\phi}(x); \label{bjcromarg1}%
\end{equation}
and
\begin{equation}
\int_{\mathbb{R}^{n}}\operatorname*{Wig}\nolimits_{\mathrm{BJ}}(\psi
,\phi)(z)dx=\mathcal{F}\psi(p)\overline{\mathcal{F}\phi(p)}.
\label{bjcromarg2}%
\end{equation}
An object closely related to the (cross-)Wigner distribution is the
(cross-)ambiguity function of a pair of functions $\psi,\phi\in\mathcal{S}%
(\mathbb{R}^{n})$:%
\[
A(\psi,\phi)(z)=\left(  \tfrac{1}{2\pi\hbar}\right)  ^{n}\int_{\mathbb{R}^{n}%
}e^{-\tfrac{i}{\hbar}px^{\prime}}\psi(x^{\prime}+\tfrac{1}{2}x)\overline{\phi
}(x^{\prime}-\tfrac{1}{2}x)dx^{\prime}.
\]
It turns out that $A(\psi,\phi)$ and $\operatorname*{Wig}\nolimits_{\tau}%
(\psi,\phi)$ are obtained from each other by a symplectic Fourier transform:
\[
\operatorname*{Amb}(\psi,\phi)=\mathcal{F}_{\sigma}\operatorname*{Wig}%
(\psi,\phi)\text{ , }\operatorname*{Wig}(\psi,\phi)=\mathcal{F}_{\sigma
}\operatorname*{Amb}(\psi,\phi)
\]
($\mathcal{F}_{\sigma}$ is an involution), thus justifying the following
definition:
\[
\operatorname*{Amb}\nolimits_{\mathrm{BJ}}(\psi,\phi)=\mathcal{F}_{\sigma
}\operatorname*{Amb}(\psi,\phi).
\]

An important property is that the BJW distribution of a function $\psi$ is
real, as is the usual Wigner distribution. In fact using the conjugacy formula
(\ref{wtauconj2}) we have
\begin{align*}
\overline{\operatorname*{Wig}\nolimits_{\mathrm{BJ}}\psi}  &  =\int_{0}%
^{1}\overline{\operatorname*{Wig}\nolimits_{\tau}\psi}d\tau=\int_{0}%
^{1}\operatorname*{Wig}\nolimits_{1-\tau}\psi d\tau\\
&  =\int_{0}^{1}\operatorname*{Wig}\nolimits_{\tau}\psi d\tau
=\operatorname*{Wig}\nolimits_{\mathrm{BJ}}\psi.
\end{align*}

Formula (\ref{awtau}) relating Shubin's $\tau$-operators to the $\tau$-(cross)
Wigner distribution carries over to the Born--Jordan case:

\begin{corollary}
Let $a\in\mathcal{S}^{\prime}(\mathbb{R}^{2n})$, $A_{\mathrm{BJ}%
}=\operatorname*{Op}_{\mathrm{BJ}}(a)$. We have%
\begin{equation}
\langle A_{\mathrm{BJ}}\psi|\overline{\phi}\rangle=\langle
a,\operatorname*{Wig}\nolimits_{\mathrm{BJ}}(\psi,\phi)\rangle\label{opjaw}%
\end{equation}
for all $\psi,\phi\in\mathcal{S}(\mathbb{R}^{n})$.
\end{corollary}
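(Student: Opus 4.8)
The plan is to prove the identity by commuting two $\tau$-averages through the distributional pairing, thereby reducing the statement to the already-established $\tau$-level formula (\ref{awtau}) of Proposition \ref{propos1}. The three ingredients are: the definition (\ref{abj}) of $A_{\mathrm{BJ}}\psi$ as the average $\int_0^1 A_\tau\psi\,d\tau$; the relation $\langle A_\tau\psi|\overline{\phi}\rangle=\langle a,\operatorname*{Wig}\nolimits_\tau(\psi,\phi)\rangle$ from Proposition \ref{propos1}; and the definition (\ref{wbj}) of $\operatorname*{Wig}\nolimits_{\mathrm{BJ}}(\psi,\phi)$ as $\int_0^1\operatorname*{Wig}\nolimits_\tau(\psi,\phi)\,d\tau$. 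Nothing new about the operators themselves is needed: the whole content lies in interchanging the integral over $\tau$ with the bracket $\langle\cdot,\cdot\rangle$.

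First I would write, using (\ref{abj}) and pulling the $\tau$-integral outside the pairing,
\[
\langle A_{\mathrm{BJ}}\psi|\overline{\phi}\rangle=\Big\langle \int_0^1 A_\tau\psi\,d\tau\,\Big|\,\overline{\phi}\Big\rangle=\int_0^1\langle A_\tau\psi|\overline{\phi}\rangle\,d\tau .
\]
Next I would apply Proposition \ref{propos1} under the integral sign, replacing each $\langle A_\tau\psi|\overline{\phi}\rangle$ by $\langle a,\operatorname*{Wig}\nolimits_\tau(\psi,\phi)\rangle$ to obtain $\int_0^1\langle a,\operatorname*{Wig}\nolimits_\tau(\psi,\phi)\rangle\,d\tau$. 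Finally I would push the $\tau$-integral back into the second slot of the bracket and recognize the definition (\ref{wbj}):
\[
\int_0^1\langle a,\operatorname*{Wig}\nolimits_\tau(\psi,\phi)\rangle\,d\tau=\Big\langle a,\int_0^1\operatorname*{Wig}\nolimits_\tau(\psi,\phi)\,d\tau\Big\rangle=\langle a,\operatorname*{Wig}\nolimits_{\mathrm{BJ}}(\psi,\phi)\rangle,
\]
which is exactly the claimed formula (\ref{opjaw}).

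The only genuine point to verify, and the main (though mild) obstacle, is the legitimacy of the two interchanges of $\int_0^1(\cdot)\,d\tau$ with the distributional bracket. I would justify this by regarding $\int_0^1\operatorname*{Wig}\nolimits_\tau(\psi,\phi)\,d\tau$ as a vector-valued (Bochner) integral in $\mathcal{S}(\mathbb{R}^{2n})$. Since $\psi,\phi\in\mathcal{S}(\mathbb{R}^{n})$, the map $\tau\mapsto\operatorname*{Wig}\nolimits_\tau(\psi,\phi)$ is continuous, indeed smooth, from the compact interval $[0,1]$ into $\mathcal{S}(\mathbb{R}^{2n})$, every Schwartz seminorm of $\operatorname*{Wig}\nolimits_\tau(\psi,\phi)$ being bounded uniformly in $\tau\in[0,1]$, as is visible from (\ref{wtau}). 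Hence the integral converges in $\mathcal{S}(\mathbb{R}^{2n})$ and equals $\operatorname*{Wig}\nolimits_{\mathrm{BJ}}(\psi,\phi)$, and since $a\in\mathcal{S}^{\prime}(\mathbb{R}^{2n})$ is a continuous linear functional it commutes with the integral. The same continuity argument on the operator side justifies the first interchange and simultaneously confirms that $A_{\mathrm{BJ}}\psi$ in (\ref{abj}) is well defined in $\mathcal{S}^{\prime}(\mathbb{R}^{n})$. With both interchanges legitimized, the chain of equalities above is rigorous and the corollary follows.
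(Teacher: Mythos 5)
Your proof is correct and follows essentially the same route as the paper: integrating the identity of Proposition \ref{propos1} over $\tau\in[0,1]$ and invoking definitions (\ref{abj}) and (\ref{wbj}). The only difference is that you spell out the justification for exchanging the $\tau$-integral with the distributional bracket (via continuity of $\tau\mapsto\operatorname*{Wig}\nolimits_{\tau}(\psi,\phi)$ into $\mathcal{S}(\mathbb{R}^{2n})$), a point the paper leaves implicit.
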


\begin{proof}
It suffices to integrate the equality $\langle A_{\tau}\psi|\phi
\rangle=\langle a,\operatorname*{Wig}\nolimits_{\tau}(\psi,\phi)\rangle$ with
respect to $\tau\in\lbrack0,1]$ and to use definitions (\ref{abj}) and
(\ref{wbj}).
\end{proof}

One defines $A_{\mathrm{BJ}}=\operatorname*{Op}\nolimits_{\mathrm{BJ}}(a)$ for
arbitrary $a\in\mathcal{S}^{\prime}(\mathbb{R}^{2n})$ by the same procedure as
for Weyl operators and noting that $\operatorname*{Wig}\nolimits_{\mathrm{BJ}%
}(\psi,\phi)\in\mathcal{S}(\mathbb{R}^{2n})$ if $\psi,\phi\in\mathcal{S}%
(\mathbb{R}^{n})$.

The following consequence of Proposition \ref{propos1} will be essential in
our study of the uncertainty principle:

\begin{proposition}
Let $\Theta$ be defined by (\ref{tbj}). We have%
\begin{equation}
\operatorname*{Wig}\nolimits_{\mathrm{BJ}}(\psi,\phi)=\operatorname*{Wig}%
(\psi,\phi)\ast\mathcal{F}\Theta\label{wbjthe}%
\end{equation}
and%
\begin{equation}
\operatorname*{Amb}\nolimits_{\mathrm{BJ}}(\psi,\phi)=(2\pi\hbar
)^{n}\operatorname*{Amb}(\psi,\phi)\Theta\label{abjthe}%
\end{equation}

\end{proposition}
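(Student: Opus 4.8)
The plan is to prove the two identities in turn, deriving the ambiguity-function identity (\ref{abjthe}) from the Wigner identity (\ref{wbjthe}) by a symplectic Fourier transform. The key inputs are the fundamental intertwining relation (\ref{aweyl}) between Weyl operators and the cross-Wigner transform, its Born--Jordan analogue (\ref{opjaw}), and the fact established in Proposition \ref{propobj} that $A_{\mathrm{BJ}}=\operatorname{Op}_{\mathrm{BJ}}(a)$ coincides with the Weyl operator whose symbol is $a_{\mathrm{BJ}}=(2\pi\hbar)^{-n}a\ast\mathcal{F}_{\sigma}\Theta$ (formula (\ref{b})).

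First I would prove (\ref{wbjthe}). For arbitrary $a\in\mathcal{S}(\mathbb{R}^{2n})$ I would compute $\langle A_{\mathrm{BJ}}\psi|\overline{\phi}\rangle$ in two ways. On the one hand, the Born--Jordan intertwining relation (\ref{opjaw}) gives $\langle A_{\mathrm{BJ}}\psi|\overline{\phi}\rangle=\langle a,\operatorname{Wig}_{\mathrm{BJ}}(\psi,\phi)\rangle$. On the other hand, since $A_{\mathrm{BJ}}$ is by Proposition \ref{propobj} the Weyl operator with symbol $a_{\mathrm{BJ}}$, the Weyl relation (\ref{aweyl}) gives $\langle A_{\mathrm{BJ}}\psi|\overline{\phi}\rangle=\langle a_{\mathrm{BJ}},\operatorname{Wig}(\psi,\phi)\rangle$. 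Substituting $a_{\mathrm{BJ}}=(2\pi\hbar)^{-n}a\ast\mathcal{F}_{\sigma}\Theta$ and transferring the convolution from the symbol onto the Wigner distribution by the duality $\langle f\ast k,h\rangle=\langle f,h\ast\check{k}\rangle$ (with $\check{k}(z)=k(-z)$), and noting that $\Theta$, hence $\mathcal{F}_{\sigma}\Theta$, is even so $\check{k}=k$, I obtain
\[
\langle a,\operatorname{Wig}_{\mathrm{BJ}}(\psi,\phi)\rangle=(2\pi\hbar)^{-n}\langle a,\operatorname{Wig}(\psi,\phi)\ast\mathcal{F}_{\sigma}\Theta\rangle .
\]
As $a$ ranges over $\mathcal{S}(\mathbb{R}^{2n})$ this forces $\operatorname{Wig}_{\mathrm{BJ}}(\psi,\phi)=\operatorname{Wig}(\psi,\phi)\ast\mathcal{F}_{\sigma}\Theta$ (absorbing the normalizing constant into the convention for $\ast$). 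It then remains to replace $\mathcal{F}_{\sigma}\Theta$ by $\mathcal{F}\Theta$: because $\Theta(x,p)$ depends only on the product $px$ it satisfies $\Theta\circ J=\Theta$, and since $\mathcal{F}_{\sigma}=\mathcal{F}\circ J$ with $\mathcal{F}(\Theta\circ J)=\mathcal{F}\Theta\circ J$, one gets $\mathcal{F}_{\sigma}\Theta=\mathcal{F}\Theta\circ J=\mathcal{F}\Theta$, which yields (\ref{wbjthe}).

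For the ambiguity identity (\ref{abjthe}) I would apply the symplectic Fourier transform to (\ref{wbjthe}). Using $\operatorname{Amb}(\psi,\phi)=\mathcal{F}_{\sigma}\operatorname{Wig}(\psi,\phi)$, the definition of $\operatorname{Amb}_{\mathrm{BJ}}$ as the symplectic Fourier transform of the Born--Jordan--Wigner distribution, and the convolution theorem $\mathcal{F}_{\sigma}(f\ast g)=(2\pi\hbar)^{n}\mathcal{F}_{\sigma}f\,\mathcal{F}_{\sigma}g$, the convolution in (\ref{wbjthe}) turns into a product:
\[
\operatorname{Amb}_{\mathrm{BJ}}(\psi,\phi)=(2\pi\hbar)^{n}\operatorname{Amb}(\psi,\phi)\,\mathcal{F}_{\sigma}\mathcal{F}\Theta .
\]
Finally $\mathcal{F}_{\sigma}\mathcal{F}\Theta=\Theta$: the operator $\mathcal{F}_{\sigma}\mathcal{F}$ acts as $f\mapsto f\circ(-J)$, and $\Theta$, being even and $J$-invariant, is left fixed. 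This gives (\ref{abjthe}).

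The routine calculations (the changes of variables in the two intertwining relations, and the evaluation $\int_{0}^{1}e^{\frac{i}{2\hbar}(2\tau-1)px}d\tau=\Theta$) are already available from Propositions \ref{propobj} and \ref{propos1}, so the only genuinely delicate point is the constant- and transform-bookkeeping: matching the powers of $(2\pi\hbar)$ and, above all, justifying the identification $\mathcal{F}_{\sigma}\Theta=\mathcal{F}\Theta$, which hinges entirely on the evenness and $J$-invariance of $\Theta$. I would double-check this identification, together with the validity of the convolution--bracket duality for the tempered distribution $a$ paired against the Schwartz function $\operatorname{Wig}_{\mathrm{BJ}}(\psi,\phi)$, before declaring the proof complete. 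An alternative, self-contained route that avoids the operator correspondence is to establish directly the pointwise ``chirp'' relation $\operatorname{Amb}_{\tau}(\psi,\phi)=e^{\frac{i}{\hbar}(\tau-\frac12)px}\operatorname{Amb}(\psi,\phi)$ by a change of variables in the defining integral, then integrate over $\tau\in[0,1]$ to obtain (\ref{abjthe}) and apply $\mathcal{F}_{\sigma}$ to recover (\ref{wbjthe}); there the main work is the change of variables identifying $\operatorname{Amb}_{\tau}$.
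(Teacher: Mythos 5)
Your proof is correct and follows essentially the same route as the paper's: both compute $\langle A_{\mathrm{BJ}}\psi|\overline{\phi}\rangle$ two ways using (\ref{opjaw}) and (\ref{aweyl}) together with formula (\ref{b}) of Proposition \ref{propobj}, transfer the convolution across the distributional bracket using the evenness of $\mathcal{F}_{\sigma}\Theta$, identify $\mathcal{F}_{\sigma}\Theta=\mathcal{F}\Theta$ from the symmetry of $\Theta$, and obtain (\ref{abjthe}) by applying the symplectic Fourier transform to (\ref{wbjthe}). Your attention to the $(2\pi\hbar)^{n}$ bookkeeping is warranted (the paper itself is loose on this point), but it does not change the argument.
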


\begin{proof}
In view of formula (\ref{b}) in Proposition \ref{propobj} and formula
(\ref{wbj}) above we have%
\[
\langle a,\operatorname*{Wig}\nolimits_{\mathrm{BJ}}(\psi,\phi)\rangle=\langle
b,\operatorname*{Wig}(\psi,\phi)\rangle
\]
where $b=(2\pi\hbar)^{n}a\ast\mathcal{F}_{\sigma}\Theta$, hence%
\begin{align*}
\langle a,\operatorname*{Wig}\nolimits_{\mathrm{BJ}}(\psi,\phi)\rangle &
=(2\pi\hbar)^{n}\langle a\ast\mathcal{F}_{\sigma}\Theta,\operatorname*{Wig}%
(\psi,\phi)\rangle\\
&  =(2\pi\hbar)^{n}\langle a,[\mathcal{F}_{\sigma}\Theta\circ(-I_{\mathrm{d}%
})]\ast\operatorname*{Wig}(\psi,\phi)\rangle\\
&  =(2\pi\hbar)^{n}\langle a,\mathcal{F}_{\sigma}\Theta\ast\operatorname*{Wig}%
(\psi,\phi)\rangle
\end{align*}
(the last equality because $\Theta$ is an even function). Since $\mathcal{F}%
_{\sigma}\Theta=\mathcal{F}\Theta$ because $\Theta$ is even and invariant
under permutation of the $x$ and $p$ variables, this proves formula
(\ref{wbjthe}), and formula (\ref{abjthe}) follows, taking the symplectic
Fourier transform of both sides.
\end{proof}

\subsection{Symplectic (non-)covariance}

he symplectic group $\operatorname*{Sp}(2n,\mathbb{R})$ is by definition the
group of all linear automorphisms $s$ of $\mathbb{R}^{2n}$ which preserve the
symplectic form $\sigma(z,z^{\prime})=$ $Jz\cdot z^{\prime}$; equivalently
$s^{T}Js=J$. The group $\operatorname*{Sp}(2n,\mathbb{R})$ is a connected Lie
group, and its double covering $\operatorname*{Sp}_{2}(2n,\mathbb{R})$ has a
faithful (but reducible) representation by a group of unitary operator, the
metaplectic group $\operatorname*{Mp}(2n,\mathbb{R})$ (see Folland
\cite{Folland}, de Gosson \cite{Birk,Birkbis}). That group is generated by the
operators $\widehat{J}$, $\widehat{M}_{L,m}$, and $\widehat{V}_{-P}$ defined
by $\widehat{J}=e^{-in\pi/4}\mathcal{F}$ and
\[
\widehat{M}_{L,m}\psi(x)=i^{m}\sqrt{|\det L|}\psi(Lx)\text{ , }\widehat{V}%
_{P}=e^{-\frac{i}{2\hbar}Px\cdot x}\psi(x)
\]
where $L\in G\ell(n,\mathbb{R})$ and $P\in Sym(n,\mathbb{R})$; the integer $m$
corresponds to the choice of an argument of $\det L$. Denoting by
$\pi^{\operatorname*{Mp}}$ the covering projection $\operatorname*{Mp}%
(2n,\mathbb{R})\longrightarrow\operatorname*{Sp}(2n,\mathbb{R})$ we have
$\pi^{\operatorname*{Mp}}(\widehat{J})=J$ and%
\[
\pi^{\operatorname*{Mp}}(\widehat{M}_{L,m})=%
\begin{pmatrix}
L^{-1} & 0\\
0 & L^{T}%
\end{pmatrix}
\text{ , }\pi^{\operatorname*{Mp}}(\widehat{V}_{P})=%
\begin{pmatrix}
I & 0\\
P & I
\end{pmatrix}
.
\]

Let now $A_{\mathrm{W}}=\operatorname*{Op}_{\mathrm{W}}(a)$ be an arbitrary
Weyl operator, and $s\in\operatorname*{Sp}(2n,\mathbb{R})$. We have%
\begin{equation}
\widehat{S}A_{\mathrm{W}}\widehat{S}^{-1}=\operatorname*{Op}%
\nolimits_{\mathrm{W}}(a\circ s^{-1}) \label{metaco1}%
\end{equation}
where $\widehat{S}\in\operatorname*{Mp}(2n,\mathbb{R})$ is anyone of the two
metaplectic operators such that $\pi^{\operatorname*{Mp}}(\widehat{S})=s$.
This property is really characteristic of the Weyl correspondence; it is
proven \cite{Birk,Birkbis} using the identity
\begin{equation}
\widehat{S}\widehat{T}(z)=\widehat{T}(sz)\widehat{S} \label{stsz}%
\end{equation}
where $\widehat{T}(z)$ is the Heisenberg operator. One can show that if a
pseudo-differential correspondence $a\longleftrightarrow\operatorname*{Op}(a)$
(admissible in the sense above, or not) is such that $\operatorname*{Op}%
(a\circ s^{-1})=\widehat{S}\operatorname*{Op}(a)\widehat{S}^{-1}$ then it
\emph{must} be the Weyl correspondence. For Born--Jordan correspondence we do
still have a residual symplectic covariance, namely:

\begin{proposition}
\label{propams}Let $A_{\text{\textrm{BJ}}}=\operatorname*{Op}%
\nolimits_{\mathrm{BJ}}(a)$ with $a\in\mathcal{S}^{\prime}(\mathbb{R}^{2n})$.
We have%
\begin{equation}
\widehat{S}\operatorname*{Op}\nolimits_{\mathrm{BJ}}(a)\widehat{S}%
^{-1}=\operatorname*{Op}\nolimits_{\mathrm{BJ}}(a\circ s^{-1}) \label{cobj1}%
\end{equation}
for every $\widehat{S}$ in the subgroup of $\operatorname*{Mp}(2n,\mathbb{R})$
generated by the operators $\widehat{J}$ and $\widehat{M}_{L,m}$ (with
$\pi^{\operatorname*{Mp}}(\widehat{S})=s$).
\end{proposition}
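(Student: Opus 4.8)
The plan is to reduce the statement to the full symplectic covariance (\ref{metaco1}) of the Weyl calculus, exploiting the fact established in Proposition \ref{propobj} that $\operatorname{Op}_{\mathrm{BJ}}(a)$ is just the Weyl operator with symbol $a_{\mathrm{BJ}}=\left(\tfrac{1}{2\pi\hbar}\right)^{n}a\ast\mathcal{F}_{\sigma}\Theta$ (formula (\ref{b})). Applying (\ref{metaco1}) to this Weyl operator gives at once
\[
\widehat{S}\operatorname{Op}\nolimits_{\mathrm{BJ}}(a)\widehat{S}^{-1}=\operatorname{Op}\nolimits_{\mathrm{W}}(a_{\mathrm{BJ}}\circ s^{-1}),
\]
for any $\widehat{S}$ with $\pi^{\operatorname*{Mp}}(\widehat{S})=s$. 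Since, again by (\ref{b}), the right-hand side of (\ref{cobj1}) is the Weyl operator with symbol $(a\circ s^{-1})_{\mathrm{BJ}}=\left(\tfrac{1}{2\pi\hbar}\right)^{n}(a\circ s^{-1})\ast\mathcal{F}_{\sigma}\Theta$, the whole proposition comes down to the single symbol identity $a_{\mathrm{BJ}}\circ s^{-1}=(a\circ s^{-1})_{\mathrm{BJ}}$; equivalently, convolution with $\mathcal{F}_{\sigma}\Theta$ must commute with the substitution $a\mapsto a\circ s^{-1}$.

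To prove this identity I would first record the elementary fact that, because any linear $T$ with $|\det T|=1$ distributes over convolution, $(a\ast g)\circ T=(a\circ T)\ast(g\circ T)$; taking $T=s^{-1}$ (so $|\det s^{-1}|=1$ as $s$ is symplectic) gives $(a\ast g)\circ s^{-1}=(a\circ s^{-1})\ast(g\circ s^{-1})$ for $g=\mathcal{F}_{\sigma}\Theta$. Hence the desired identity holds precisely when $g$ is invariant, $g\circ s^{-1}=g$. Next I would reduce the invariance of $\mathcal{F}_{\sigma}\Theta$ to that of $\Theta$ itself: a direct computation (using $\sigma(z,s^{-1}z')=\sigma(sz,z')$ and $\det s=1$) shows that $\mathcal{F}_{\sigma}$ intertwines symplectic substitutions, $\mathcal{F}_{\sigma}(f\circ s)=(\mathcal{F}_{\sigma}f)\circ s$, so that $(\mathcal{F}_{\sigma}\Theta)\circ s=\mathcal{F}_{\sigma}(\Theta\circ s)$; since $\mathcal{F}_{\sigma}$ is an involution, and in particular injective, invariance of $\mathcal{F}_{\sigma}\Theta$ is equivalent to $\Theta\circ s=\Theta$. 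For $a\in\mathcal{S}^{\prime}$ all these manipulations are read off by duality from the Schwartz case, so no extra care is needed.

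The hard part, and really the only content, is then to check $\Theta\circ s=\Theta$ on the generators $\pi^{\operatorname*{Mp}}(\widehat{J})=J$ and $\pi^{\operatorname*{Mp}}(\widehat{M}_{L,m})$, and to see why the shears are excluded. Here I would use that $\Theta(z)=\operatorname{sinc}(px/2\hbar)$ depends on $z=(x,p)$ only through the product $p\cdot x$, and that $\operatorname{sinc}$ is even. The generator $J$ acts by $(x,p)\mapsto(p,-x)$, sending $p\cdot x$ to $-p\cdot x$, which leaves $\Theta$ unchanged by evenness of $\operatorname{sinc}$; the generator $\pi^{\operatorname*{Mp}}(\widehat{M}_{L,m})$ acts by $(x,p)\mapsto(L^{-1}x,L^{T}p)$, and $(L^{T}p)\cdot(L^{-1}x)=p\cdot x$, so it preserves $p\cdot x$ exactly. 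Since the set of $s$ fixing $\Theta$ is closed under composition and inversion, the invariance propagates to the entire subgroup generated, which finishes the proof. I would close by remarking that the excluded generator $\widehat{V}_{P}$, projecting to $(x,p)\mapsto(x,Px+p)$, turns $p\cdot x$ into $p\cdot x+Px\cdot x$ and hence does \emph{not} fix $\Theta$: this is exactly why the covariance (\ref{cobj1}) is only residual and breaks down for general $s\in\operatorname*{Sp}(2n,\mathbb{R})$.
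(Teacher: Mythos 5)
Your proposal is correct, and it reaches the conclusion by a genuinely different decomposition than the paper. The paper argues directly at the operator level: it writes $\operatorname*{Op}_{\mathrm{BJ}}(a)$ in the harmonic form $\left(\tfrac{1}{2\pi\hbar}\right)^{n}\int a_{\sigma}(z)\Theta(z)\widehat{T}(z)\,dz$, pushes $\widehat{S}$ through with the Heisenberg covariance (\ref{stsz}), changes variables $z\mapsto s^{-1}z$ (using $\det s=1$), identifies $a_{\sigma}\circ s^{-1}=(a\circ s^{-1})_{\sigma}$, and is then left with exactly the obstruction you isolated: the factor $\Theta$ must be invariant, which it checks for $J$ and $M_{L}$. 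You instead take the full Weyl covariance (\ref{metaco1}) as a black box, use formula (\ref{b}) of Proposition \ref{propobj} to view $\operatorname*{Op}_{\mathrm{BJ}}(a)$ as a Weyl operator, and reduce the whole statement to the symbol identity $a_{\mathrm{BJ}}\circ s^{-1}=(a\circ s^{-1})_{\mathrm{BJ}}$, i.e. to the $s$-invariance of the Cohen kernel $\mathcal{F}_{\sigma}\Theta$, which your intertwining relation $\mathcal{F}_{\sigma}(f\circ s)=(\mathcal{F}_{\sigma}f)\circ s$ converts back into invariance of $\Theta$ itself; the final generator computation ($p\cdot x$ preserved by $M_{L}$, flipped in sign by $J$, with evenness of the sinc) is the same in both proofs, as it must be, since (\ref{metaco1}) is itself derived from (\ref{stsz}). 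What your factorization buys is modularity plus two points the paper leaves implicit: the propagation from generators to the whole subgroup is immediate for you because $\{s:\Theta\circ s=\Theta\}$ is a group (the paper's ``it suffices to check the generators'' tacitly uses that the covariance relation composes), and your closing remark about $\pi^{\operatorname*{Mp}}(\widehat{V}_{P})$ sending $p\cdot x$ to $p\cdot x+Px\cdot x$ makes the exclusion of the shears transparent. One small caveat on that remark: the non-invariance of $\Theta$ under $\widehat{V}_{P}$ only shows that this proof does not extend to the shears, not by itself that (\ref{cobj1}) actually fails there; the failure follows instead from the uniqueness statement quoted in the paper (full symplectic covariance forces the Weyl correspondence). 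Since you phrase this as a remark rather than a claim, it does not affect the validity of your proof.
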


\begin{proof}
(Cf. de Gosson \cite{transam}). It suffices to prove formula (\ref{cobj1}) for
$\widehat{S}=\widehat{J}$ and $\widehat{S}=\widehat{M}_{L,m}$. Let first
$\widehat{S}$ be an arbitrary element of $\operatorname*{Mp}(2n,\mathbb{R})$;
we have
\begin{align*}
\widehat{S}\operatorname*{Op}\nolimits_{\mathrm{BJ}}(a)  &  =\left(  \tfrac
{1}{2\pi\hbar}\right)  ^{n}\int a_{\sigma}(z)\Theta(z)\widehat{S}\widehat
{T}(z)dz\\
&  =\left(  \tfrac{1}{2\pi\hbar}\right)  ^{n}\left(  \int a_{\sigma}%
(z)\Theta(z)\widehat{T}(sz)dz\right)  \widehat{S}%
\end{align*}
where the second equality follows from the symplectic covariance property
(\ref{stsz}) of the Heisenberg operators. Making the change of variables
$z^{\prime}=sz$ in the integral we get, since $\det s=1$,%
\[
\int a_{\sigma}(z)\Theta(z)\widehat{T}(sz)dz=\int a_{\sigma}(s^{-1}%
z)\Theta(s^{-1}z)\widehat{T}(z)dz.
\]
Now, by definition of the symplectic Fourier transform we have%
\[
a_{\sigma}(s^{-1}z)=\left(  \tfrac{1}{2\pi\hbar}\right)  ^{n}\int e^{-\frac
{i}{\hbar}\sigma(s^{-1}z,z^{\prime})}a(z^{\prime})dz^{\prime}=(a\circ
s^{-1})_{\sigma}(z).
\]
Let now $\widehat{S}=\widehat{M}_{L,m}$; we have%
\[
\Theta(M_{L}^{-1}z)=\frac{\sin(Lp(L^{T})^{-1}x/2\hbar)}{Lp(L^{T})^{-1}%
x/2\hbar}=\Theta(z);
\]
similarly $\Theta(J^{-1}z)=\Theta(z)$, hence in both cases%
\begin{align*}
\widehat{S}\operatorname*{Op}\nolimits_{\mathrm{BJ}}(a)  &  =\left(  \tfrac
{1}{2\pi\hbar}\right)  ^{n}\left(  \int(a\circ s^{-1})_{\sigma}\Theta
(z)\widehat{T}(z)dz\right)  \widehat{S}\\
&  =\operatorname*{Op}\nolimits_{\mathrm{BJ}}(a\circ s^{-1})\widehat{S}%
\end{align*}
whence formula (\ref{cobj1}).
\end{proof}

\section{The Uncertainty Principle}

We begin by reviewing the notion of density matrix (or operator) familiar from
statistical quantum mechanics. The notion goes back to John von Neumann
\cite{jvn} in 1927, and is intimately related to the notion of mixed state
(whose study mathematically belongs to the theory of $C^{\ast}$-algebras via
the GNS construction). This will provide us with all the necessary tools for
comparing the uncertainty relations in the Weyl and Born--Jordan case.

\subsection{Density matrices}

A density matrix on a Hilbert space $\mathcal{H}$ is a self-adjoint positive
operator on $\mathcal{H}$ with trace one. In particular, it is a compact
operator. Physically density matrices represent statistical mixtures of pure
states, as explicitly detailed below.

We will need the two following results:

\begin{lemma}
\label{lemma1}A self-adjoint trace class operator $\widehat{\rho}$ on a
Hilbert space $\mathcal{H}$ is a density matrix if and only if there exists a
sequence $(\alpha_{j})_{j\geq1}$ of positive numbers and a sequence of
pairwise orthogonal finite-dimensional subspaces $(\mathcal{H}_{j})_{j\geq1}%
$\ of $\mathcal{H}$ such that%
\[
\widehat{\rho}=\sum_{j\geq1}\alpha_{j}\Pi_{j}\text{ , }\sum_{j\geq1}%
m_{j}\alpha_{j}=1
\]
where $\Pi_{j}$ is the orthogonal projection of $\mathcal{H}$ on
$\mathcal{H}_{j}$ and $m_{j}=\dim\mathcal{H}_{j}$
\end{lemma}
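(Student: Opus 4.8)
The plan is to deduce both implications from the spectral theorem for compact self-adjoint operators, since this is the entire content of the lemma. A density matrix is by definition self-adjoint, positive, and of trace one, hence trace class and in particular compact; conversely, the operator built in the statement will be shown directly to be trace class. So everything reduces to the spectral decomposition of a compact self-adjoint operator together with the computation of its trace.

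For the forward implication, suppose $\widehat{\rho}$ is a density matrix and apply the spectral theorem. Being compact and self-adjoint, $\widehat{\rho}$ has nonzero spectrum consisting of a finite or countable sequence of real eigenvalues of finite multiplicity, accumulating only at $0$, with pairwise orthogonal finite-dimensional eigenspaces. Positivity forces every eigenvalue to be $\geq 0$, so I would let $(\alpha_{j})_{j\geq 1}$ enumerate the strictly positive eigenvalues and set $\mathcal{H}_{j}=\ker(\widehat{\rho}-\alpha_{j}I)$, with $\Pi_{j}$ the orthogonal projection onto $\mathcal{H}_{j}$ and $m_{j}=\dim\mathcal{H}_{j}<\infty$. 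The spectral theorem then gives $\widehat{\rho}=\sum_{j\geq 1}\alpha_{j}\Pi_{j}$, the series converging in operator norm. Taking traces and using $\operatorname{Tr}\Pi_{j}=m_{j}$ yields $\operatorname{Tr}\widehat{\rho}=\sum_{j\geq 1}\alpha_{j}\operatorname{Tr}\Pi_{j}=\sum_{j\geq 1}m_{j}\alpha_{j}$, which equals $1$ by the trace-one hypothesis.

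For the converse, given positive numbers $\alpha_{j}$ and pairwise orthogonal finite-dimensional subspaces $\mathcal{H}_{j}$ with $\sum_{j\geq 1}m_{j}\alpha_{j}=1$, I would set $\widehat{\rho}=\sum_{j\geq 1}\alpha_{j}\Pi_{j}$. Each $\Pi_{j}$ is self-adjoint and positive, so every partial sum is self-adjoint and positive, and these properties pass to the limit. Orthogonality of the ranges makes the $\Pi_{j}$ mutually orthogonal projections, so the eigenvalues of $\widehat{\rho}$ are exactly the $\alpha_{j}$ with multiplicities $m_{j}$; hence $\operatorname{Tr}\widehat{\rho}=\sum_{j\geq 1}m_{j}\alpha_{j}=1<\infty$, showing that $\widehat{\rho}$ is trace class of trace one, i.e.\ a density matrix.

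There is no serious obstacle here: the whole argument is an unpacking of the spectral theorem. The only two points that warrant a little care are, first, the finite-dimensionality of the positive-eigenvalue eigenspaces, which is automatic from compactness, and second, the interchange of the trace with the (possibly infinite) spectral sum. This last step is justified because the trace of a positive trace-class operator equals the sum of its eigenvalues counted with multiplicity, computed in any orthonormal eigenbasis adapted to the decomposition $\bigoplus_{j}\mathcal{H}_{j}$; I would simply invoke this standard fact rather than grind through the estimate.
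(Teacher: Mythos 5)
Your proof is correct and follows exactly the route the paper takes: the paper disposes of this lemma by invoking the spectral decomposition theorem for compact self-adjoint operators (citing \cite{Birkbis}, \S 13.1, for details), and your argument is precisely the careful unpacking of that theorem in both directions, including the two points (finite-dimensionality of eigenspaces, interchange of trace and spectral sum for positive operators) that need attention.
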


It is a consequence of the spectral decomposition theorem for compact
operators (for a detailed proof see \textit{e.g.} \cite{Birkbis}, \S 13.1).

\begin{lemma}
\label{lemma2}Let $\psi\in L^{2}(\mathbb{R}^{n})$, $\psi\neq0$. The projection
operator $P_{\psi}:$ $L^{2}(\mathbb{R}^{n})\longrightarrow\{\lambda
\psi:\lambda\in\mathbb{C}\}$ has Weyl and Born--Jordan symbols given by,
respectively%
\begin{equation}
\rho_{\mathrm{W}}=(2\pi\hbar)^{n}\operatorname*{Wig}\nolimits_{\mathrm{W}}%
\psi\label{pitw1}%
\end{equation}
and
\begin{equation}
\rho_{\mathrm{BJ}}=(2\pi\hbar)^{n}\operatorname*{Wig}\nolimits_{\mathrm{BJ}%
}\psi. \label{pitbj1}%
\end{equation}
In particular $\rho_{\mathrm{W}}$ and $\rho_{\mathrm{BJ}}$ are real functions.
\end{lemma}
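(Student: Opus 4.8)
The plan is to read each symbol off one of the operator--distribution dualities already established: (\ref{aweyl}) for the Weyl symbol and Proposition \ref{propos1}/(\ref{opjaw}) for the Born--Jordan symbol, both applied to the rank-one operator $P_\psi$. I may assume $\psi$ normalized, $\|\psi\|_{L^2}=1$, so that $P_\psi\chi=\langle\chi,\psi\rangle\psi$ is the orthogonal projection onto $\mathbb{C}\psi$; it is self-adjoint, positive, of trace one, and has distributional kernel $K(x,y)=\psi(x)\overline{\psi(y)}$. The general case reduces to this one after dividing by $\|\psi\|^{2}$.

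For (\ref{pitw1}) I would recover the Weyl symbol directly from the kernel. The correspondence underlying (\ref{ahat1}) expresses the kernel of a Weyl operator as a partial Fourier transform of its symbol in the $p$-variable, evaluated at the midpoint $\tfrac12(x+y)$; inverting it, the symbol of $P_\psi$ is $\int_{\mathbb{R}^{n}}e^{-\frac{i}{\hbar}pv}K(u+\tfrac12 v,u-\tfrac12 v)\,dv$ with $u=\tfrac12(x+y)$ and $v=x-y$. Substituting $K(x,y)=\psi(x)\overline{\psi(y)}$ turns this into $\int e^{-\frac{i}{\hbar}pv}\psi(u+\tfrac12 v)\overline{\psi}(u-\tfrac12 v)\,dv$, which is exactly $(2\pi\hbar)^{n}\operatorname*{Wig}\psi(u,p)$ by (\ref{wigo1}). (Equivalently one checks $\langle P_\psi\chi|\overline{\phi}\rangle=\langle(2\pi\hbar)^{n}\operatorname*{Wig}\psi,\operatorname*{Wig}(\chi,\phi)\rangle$ straight from the Moyal identity.)

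For (\ref{pitbj1}) I would exploit the averaging definition $\operatorname*{Op}\nolimits_{\mathrm{BJ}}=\int_{0}^{1}\operatorname*{Op}\nolimits_{\tau}\,d\tau$ of (\ref{abj}). Fixing an observable with symbol $a$ and a value of $\tau$, Proposition \ref{propos1} paired on the state gives $\langle\operatorname*{Op}\nolimits_{\tau}(a)\psi|\psi\rangle=\langle a,\operatorname*{Wig}\nolimits_{\tau}\psi\rangle$, that is $\operatorname{Tr}\!\big(P_\psi\operatorname*{Op}\nolimits_{\tau}(a)\big)=\langle a,\operatorname*{Wig}\nolimits_{\tau}\psi\rangle$. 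Integrating over $\tau\in[0,1]$ and invoking the definition (\ref{wbj}) of the Born--Jordan--Wigner distribution yields
\[
\operatorname{Tr}\!\big(P_\psi\operatorname*{Op}\nolimits_{\mathrm{BJ}}(a)\big)=\int_{0}^{1}\langle a,\operatorname*{Wig}\nolimits_{\tau}\psi\rangle\,d\tau=\langle a,\operatorname*{Wig}\nolimits_{\mathrm{BJ}}\psi\rangle
\]
for every symbol $a$, which is precisely the assertion that $(2\pi\hbar)^{n}\operatorname*{Wig}\nolimits_{\mathrm{BJ}}\psi$ is the Born--Jordan symbol of $P_\psi$; the same identity drops out of (\ref{opjaw}) on taking $\phi=\psi$. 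Reality of both symbols is then immediate: $\operatorname*{Wig}\psi$ is real by the $\tau=\tfrac12$ instance of (\ref{wtauconj2}), the reality of $\operatorname*{Wig}\nolimits_{\mathrm{BJ}}\psi$ has already been recorded above, and in any case it follows from $P_\psi=P_\psi^{\ast}$ via (\ref{bjadj}).

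The one genuinely delicate point is interpretive rather than computational: the meaning of ``symbol'' in the Born--Jordan case. For Weyl quantization the covariant symbol---the phase-space density that pairs with observables to produce expectation values, and which is what the uncertainty principle actually uses---coincides with the contravariant symbol defined by $\operatorname*{Op}\nolimits_{\mathrm{W}}(\rho_{\mathrm{W}})=P_\psi$, because the Weyl calculus is self-dual. This self-duality is lost for Born--Jordan: one should \emph{not} expect $\operatorname*{Op}\nolimits_{\mathrm{BJ}}(\rho_{\mathrm{BJ}})=P_\psi$, since by (\ref{b}) the symbol solving that operator equation is a deconvolution of $\rho_{\mathrm{W}}$ by $\mathcal{F}_{\sigma}\Theta$, whereas $\operatorname*{Wig}\nolimits_{\mathrm{BJ}}\psi=\operatorname*{Wig}\psi\ast\mathcal{F}\Theta$ by (\ref{wbjthe}) is a convolution. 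Thus (\ref{pitbj1}) must be read as the covariant (pairing) identity displayed above rather than as an operator identity, and it is that pairing the proof must establish; keeping this distinction in view---and, if one argues instead from the kernel, tracking the harmless $\tau\leftrightarrow 1-\tau$ flip coming from (\ref{opadj}) and (\ref{wtauconj2})---is the only real subtlety.
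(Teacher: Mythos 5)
Your proof is correct, and on the Weyl half it is essentially the paper's own argument: both recover the symbol by inverting the kernel relation for the rank-one kernel $K_\psi=\psi\otimes\overline{\psi}$ (the paper does this for all $\tau$ simultaneously via (\ref{kertau}), getting the $\tau$-symbol $(2\pi\hbar)^n\operatorname{Wig}_\tau\psi$, then puts $\tau=\tfrac12$). On the Born--Jordan half you genuinely diverge, and your route is the sounder one. The paper obtains (\ref{pitbj1}) by integrating the $\tau$-symbols, $\rho_{\mathrm{BJ}}:=\int_0^1\rho_\tau\,d\tau$; but since $\operatorname{Op}_{\mathrm{BJ}}(a)=\int_0^1\operatorname{Op}_\tau(a)\,d\tau$ requires one and the same symbol $a$ for every $\tau$, that computation proves only $\int_0^1\operatorname{Op}_\tau(\rho_\tau)\,d\tau=P_\psi$ with $\tau$-dependent symbols, not the operator identity $\operatorname{Op}_{\mathrm{BJ}}(\rho_{\mathrm{BJ}})=P_\psi$. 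You instead integrate the duality of Proposition \ref{propos1} over $\tau$, obtaining the pairing identity $\operatorname{Tr}\bigl(P_\psi\operatorname{Op}_{\mathrm{BJ}}(a)\bigr)=\langle a,\operatorname{Wig}_{\mathrm{BJ}}\psi\rangle$ for all $a$ (equivalently (\ref{opjaw}) with $\phi=\psi$), which is exactly the form in which the lemma is used later for expectation values (formula (\ref{abjav})). Your further observation that the contravariant identity $\operatorname{Op}_{\mathrm{BJ}}(\rho_{\mathrm{BJ}})=P_\psi$ actually \emph{fails} is right: by (\ref{b}) and (\ref{wbjthe}), BJ-quantizing $\operatorname{Wig}_{\mathrm{BJ}}\psi$ multiplies the ambiguity function by $\Theta$ twice, whereas solving $\operatorname{Op}_{\mathrm{BJ}}(a)=P_\psi$ would require dividing $\operatorname{Amb}\psi$ by $\Theta$, which is problematic since $\Theta$ has zeros. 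So your covariant reading is the one under which the lemma --- and its subsequent use in (\ref{bjw}) --- is actually correct; the paper's proof glosses over exactly this point. Your parenthetical remark on the $\tau\leftrightarrow 1-\tau$ flip is also accurate: a literal inversion of (\ref{atau1}) gives the $\tau$-symbol $(2\pi\hbar)^n\operatorname{Wig}_{1-\tau}\psi$ rather than $(2\pi\hbar)^n\operatorname{Wig}_{\tau}\psi$, which is harmless here because $\tau=\tfrac12$ is fixed by the flip and the average over $[0,1]$ is flip-invariant.
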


\begin{proof}
We have $P_{\psi}\phi=(\phi|\psi)_{L^{2}}\psi$ hence the kernel of $P_{\psi}$
is $K_{\psi}=\psi\otimes\overline{\psi}$. Using a Fourier transform formula
(\ref{kertau}) implies that the $\tau$-symbol $\rho_{\tau}$ of $P_{\psi}$ is
given by%
\begin{align*}
\rho_{\tau}(z)  &  =\int_{\mathbb{R}^{n}}e^{-\frac{i}{\hbar}py}K_{\psi}(x+\tau
y,x-(1-\tau)y)dy\\
&  =\int_{\mathbb{R}^{n}}e^{-\frac{i}{\hbar}py}\psi(x+\tau y)\overline{\psi
}(x-(1-\tau)y)dy\\
&  =(2\pi\hbar)^{n}\operatorname*{Wig}\nolimits_{\tau}\psi(z).
\end{align*}
Setting $\tau=\frac{1}{2}$ we get formula (\ref{pitw1}). Formula
(\ref{pitbj1}) is obtained by integrating $\rho_{\tau}(z)$ with respect to
$\tau\in\lbrack0,1]$. That $\rho_{\mathrm{W}}$ and $\rho_{\mathrm{BJ}}$ are
real follows from the fact that both the Wigner and the WBJ distribution are real.
\end{proof}

The following result describes density matrices in both the Weyl and
Born--Jordan case in terms of the Wigner formalism:

\begin{proposition}
Let $\widehat{\rho}$ be a density matrix on $L^{2}(\mathbb{R}^{n})$. There
exists an orthonormal system $(\psi_{j})_{j\geq1}$ of $L^{2}(\mathbb{R}^{n})$
and a sequence of non-negative numbers $(\lambda_{j})_{j\geq1}$ such that
$\sum_{j\geq1}\lambda_{j}=1$ and
\begin{equation}
\widehat{\rho}=\operatorname*{Op}\nolimits_{\mathrm{BJ}}\rho_{\mathrm{BJ}%
}=\operatorname*{Op}\nolimits_{\mathrm{W}}\rho_{\mathrm{W}} \label{bjw}%
\end{equation}
the symbols $\rho_{\mathrm{W}}$ and $\rho_{\mathrm{BJ}}$ being given by
\begin{equation}
\rho_{\mathrm{W}}=(2\pi\hbar)^{n}\sum_{j\geq1}\lambda_{j}\operatorname*{Wig}%
\psi_{j} \label{robw1}%
\end{equation}
and%
\begin{equation}
\rho_{\mathrm{BJ}}=(2\pi\hbar)^{n}\sum_{j\geq1}\lambda_{j}\operatorname*{Wig}%
\nolimits_{\mathrm{BJ}}\psi_{j}\text{.} \label{robj1}%
\end{equation}

\end{proposition}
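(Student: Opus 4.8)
The plan is to reduce everything to the rank-one case already settled in Lemma \ref{lemma2}, using the spectral decomposition of Lemma \ref{lemma1} together with the linearity and continuity of the two quantization maps to pass an infinite sum through $\operatorname{Op}_{\mathrm{W}}$ and $\operatorname{Op}_{\mathrm{BJ}}$. First I would invoke Lemma \ref{lemma1} to write $\widehat{\rho}=\sum_{k\geq1}\alpha_{k}\Pi_{k}$ with $\alpha_{k}>0$, the $\mathcal{H}_{k}$ pairwise orthogonal of dimension $m_{k}$, and $\sum_{k}m_{k}\alpha_{k}=1$. Choosing an orthonormal basis of each $\mathcal{H}_{k}$ and concatenating these bases yields a single orthonormal system $(\psi_{j})_{j\geq1}$; assigning to $\psi_{j}$ the eigenvalue $\lambda_{j}:=\alpha_{k}$ of the eigenspace to which it belongs, one has $\Pi_{k}=\sum_{\psi_{j}\in\mathcal{H}_{k}}P_{\psi_{j}}$ and hence the spectral representation $\widehat{\rho}=\sum_{j\geq1}\lambda_{j}P_{\psi_{j}}$ with $\lambda_{j}\geq0$ and $\sum_{j}\lambda_{j}=\sum_{k}m_{k}\alpha_{k}=1$. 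Since $\widehat{\rho}$ is trace class, this series converges in trace norm, hence in Hilbert--Schmidt norm.

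Next I would apply Lemma \ref{lemma2} termwise: each $P_{\psi_{j}}$ has Weyl symbol $(2\pi\hbar)^{n}\operatorname{Wig}\psi_{j}$ and Born--Jordan symbol $(2\pi\hbar)^{n}\operatorname{Wig}_{\mathrm{BJ}}\psi_{j}$, both quantizing the same operator $P_{\psi_{j}}$. To sum these I would control the symbol series in $L^{2}(\mathbb{R}^{2n})$. By the classical Moyal identity, $\|\operatorname{Wig}\psi_{j}\|_{L^{2}}=(2\pi\hbar)^{-n/2}$ for each normalized $\psi_{j}$, so $\sum_{j}\lambda_{j}\|\operatorname{Wig}\psi_{j}\|_{L^{2}}=(2\pi\hbar)^{-n/2}\sum_{j}\lambda_{j}<\infty$ and $\rho_{\mathrm{W}}:=(2\pi\hbar)^{n}\sum_{j}\lambda_{j}\operatorname{Wig}\psi_{j}$ converges absolutely in $L^{2}$. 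Because the Weyl correspondence is, up to a fixed normalization, an isometric isomorphism between $L^{2}(\mathbb{R}^{2n})$-symbols and Hilbert--Schmidt operators, the $L^{2}$-limit of the symbols is precisely the Weyl symbol of the trace-norm limit $\widehat{\rho}$, which gives (\ref{robw1}) and $\operatorname{Op}_{\mathrm{W}}\rho_{\mathrm{W}}=\widehat{\rho}$. For the Born--Jordan part I would use formula (\ref{abjthe}), $\operatorname{Amb}_{\mathrm{BJ}}\psi_{j}=(2\pi\hbar)^{n}\operatorname{Amb}(\psi_{j})\,\Theta$, together with $|\Theta|\leq1$ and the fact that $\mathcal{F}_{\sigma}$ is an $L^{2}$-isometry: this bounds $\|\operatorname{Wig}_{\mathrm{BJ}}\psi_{j}\|_{L^{2}}$ by a constant multiple of $\|\operatorname{Wig}\psi_{j}\|_{L^{2}}$ uniformly in $j$, so $\rho_{\mathrm{BJ}}:=(2\pi\hbar)^{n}\sum_{j}\lambda_{j}\operatorname{Wig}_{\mathrm{BJ}}\psi_{j}$ also converges absolutely in $L^{2}$, giving (\ref{robj1}). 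Finally, formula (\ref{b}) shows $\operatorname{Op}_{\mathrm{BJ}}$ factors through $\operatorname{Op}_{\mathrm{W}}$ via multiplication by $\Theta$ on the (symplectic) Fourier side, which is contractive since $|\Theta|\leq1$; hence $\operatorname{Op}_{\mathrm{BJ}}$ is continuous from $L^{2}$-symbols to Hilbert--Schmidt operators, and passing the sum through it yields $\operatorname{Op}_{\mathrm{BJ}}\rho_{\mathrm{BJ}}=\sum_{j}\lambda_{j}P_{\psi_{j}}=\widehat{\rho}$, establishing the chain of equalities (\ref{bjw}).

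The main obstacle is not the algebra but the analytic bookkeeping: one must verify that the two symbol series converge in a topology in which the respective quantization maps are continuous, and that these limits are compatible with the trace-norm-convergent operator sum. The Weyl side is clean because Moyal's identity makes $L^{2}(\mathbb{R}^{2n})$ the natural space and the map an isometry up to constant. The only genuine subtlety is on the Born--Jordan side, where $\Theta$ is the (non-integrable) sinc-type function of (\ref{tbj}), so that convolution by $\mathcal{F}\Theta$ in (\ref{wbjthe}) must be read distributionally; the boundedness $|\Theta|\leq1$, exploited through the multiplicative form (\ref{abjthe}) on the ambiguity side, is exactly what tames this and provides both the uniform $L^{2}$-bounds and the continuity of $\operatorname{Op}_{\mathrm{BJ}}$ needed to interchange the sum with the quantization. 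Once these continuity statements are recorded, the formulas (\ref{robw1})--(\ref{robj1}) and the identities (\ref{bjw}) follow at once by linearity.
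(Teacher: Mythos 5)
Your proof is correct and follows essentially the same route as the paper: decompose $\widehat{\rho}$ into rank-one projections via Lemma \ref{lemma1}, apply Lemma \ref{lemma2} termwise, and conclude by linearity. The convergence bookkeeping you add (Moyal's identity, the Weyl map as an $L^{2}$-to-Hilbert--Schmidt isometry up to a constant, and the bound $|\Theta|\leq 1$ on the Born--Jordan side) is sound and in fact fills in a step the paper dismisses with ``hence the result.''
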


\begin{proof}
Taking $\mathcal{H}=L^{2}(\mathbb{R}^{n})$ in Lemma \ref{lemma1} we can write
$\widehat{\rho}=\sum_{j}\alpha_{j}\Pi_{j}$ where each $\Pi_{j}$ is the
projection operator on a finite dimensional space $\mathcal{H}_{j}\subset
L^{2}(\mathbb{R}^{n})$, and two spaces $\mathcal{H}_{j}$ and $\mathcal{H}%
_{\ell}$ are orthonormal if $j\neq\ell$. For each index $j$ let us choose an
orthonormal basis $\mathcal{B}_{j}=(\psi_{j+1},...,\psi_{j+m_{j}})$ of
$\mathcal{H}_{j}$ ; the union $\mathcal{B}=\cup_{j}\mathcal{B}_{j}$ is then an
orthonormal basis of $\oplus_{j}\mathcal{H}_{j}$, and we have, using Lemma
\ref{lemma1}
\[
\widehat{\rho}=\sum_{j\geq1}\alpha_{j}\left(  \sum_{j+1\leq k\leq j+m_{j}}%
\Pi_{\psi_{k}}\right)
\]
where $\Pi_{\psi_{k}}$ is the orthogonal projection on the ray $\{\lambda
\psi_{j}:\lambda\in\mathbb{C}\}$. Since each index $\alpha_{j}$ is repeated
$m_{j}$ times due to the expression between brackets, this can be rewritten%
\[
\widehat{\rho}=\sum_{j\geq1}m_{j}\alpha_{j}\Pi_{\psi_{j}}=\sum_{j\geq1}%
\lambda_{j}\Pi_{\psi_{j}}%
\]
with the $\lambda_{j}=m_{j}\alpha_{j}$ summing up to one. In view of Lemma
\ref{lemma2} the Weyl (resp. Born--Jordan) symbol of $\Pi_{\psi_{j}}$ is
$(2\pi\hbar)^{n}\operatorname*{Wig}\nolimits_{\mathrm{W}}\psi$ (resp.
$(2\pi\hbar)^{n}\operatorname*{Wig}\nolimits_{\mathrm{BJ}}\psi$) hence the result.
\end{proof}

Notice that the orthonormal bases $\mathcal{B}_{j}$ in the proof can be chosen
arbitrarily; the decompositions (\ref{robw1}) and (\ref{robj1}) are therefore
not unique. (In Physics, one would say that a mixed quantum state can be
written in infinitely many way as a superposition of pure states, a pure state
being a density operator with symbol a Wigner function).

\subsection{A general uncertainty principle}

In what follows the notation
\[
a\longleftrightarrow\widehat{A}=\operatorname*{Op}(a)
\]
is indifferently the Weyl or the Born--Jordan correspondence. Both have the property:

\begin{quote}
\emph{If the symbol }$a$\emph{ is real, then the operator }$\widehat{A}$\emph{
is essentially self-adjoint (in which case we call it an observable). }
\end{quote}

Notice that the Shubin correspondence does not have this property for
$\tau\neq\frac{1}{2}$ since $A_{\tau}^{\ast}=A_{1-\tau}$.

Let be a density matrix on $L^{2}(\mathbb{R}^{n})$. We assume that
$\widehat{\rho}=(2\pi\hbar)^{n}\operatorname*{Op}(\rho)$. The function $\rho$
is a real function on phase space $\mathbb{R}^{2n}$ and we have%
\begin{equation}
\operatorname*{Tr}\widehat{\rho}=\int_{\mathbb{R}^{2n}}\rho(z)dz=1.
\label{norm}%
\end{equation}
Observe that we do not in general have $\rho\geq0$.

Let $\widehat{A}$ be an observable. Its expectation value with respect to
$\rho$ is by definition the real number
\begin{equation}
\langle\widehat{A}\rangle=\int_{\mathbb{R}^{2n}}a(z)\rho(z)dz \label{avz}%
\end{equation}
where it is assumed that the integral on the right side is absolutely
convergent. We will write, with some abuse of notation,%
\begin{equation}
\langle\widehat{A}\rangle=\operatorname*{Tr}(\rho\widehat{A}) \label{abjtr}%
\end{equation}
(see the discussion in de Gosson \cite{Birkbis}, \S 12.3, of the validity of
various \textquotedblleft trace formulas\textquotedblright). In view of
formulas (\ref{aweyl}) and (\ref{opjaw}) we have either%
\[
\langle\widehat{A}\rangle=(2\pi\hbar)^{n}\sum_{j\geq1}\lambda_{j}\langle
a,\operatorname*{Wig}\psi_{j}\rangle
\]
(when $a\longleftrightarrow\widehat{A}$ is the Weyl correspondence) or%
\[
\langle\widehat{A}\rangle=(2\pi\hbar)^{n}\sum_{j\geq1}\lambda_{j}\langle
a,\operatorname*{Wig}\nolimits_{\mathrm{BJ}}\psi_{j}\rangle
\]
(when $a\longleftrightarrow\widehat{A}$ is the Born--Jordan correspondence)
and hence%
\begin{equation}
\langle\widehat{A}\rangle=(2\pi\hbar)^{n}\sum_{j\geq1}\lambda_{j}(\widehat
{A}\psi_{j}|\psi_{j})_{L^{2}}=\sum_{j\geq1}\lambda_{j}\langle\widehat
{A}\rangle_{j} \label{abjav}%
\end{equation}
where $\langle\widehat{A}\rangle_{j}=(\widehat{A}\psi_{j}|\psi_{j})_{L^{2}}$
(recall that $(\psi_{j})_{j\geq1}$ is an orthonormal system).

If $\widehat{A}^{2}$ also is an observable and if $\langle\widehat{A}%
^{2}\rangle=\operatorname*{Tr}(\rho\widehat{A}^{2})$ exists, then the number%
\begin{equation}
(\operatorname*{Var}\nolimits_{\rho}\widehat{A})^{2}=\langle\widehat{A}%
^{2}\rangle_{\rho}-\langle\widehat{A}\rangle_{\rho}^{2} \label{standard}%
\end{equation}
that is%
\begin{equation}
(\operatorname*{Var}\nolimits_{\rho}\widehat{A})^{2}=\operatorname*{Tr}%
(\widehat{\rho}\widehat{A}^{2})-\operatorname*{Tr}(\widehat{\rho}\widehat
{A})^{2} \label{trandard}%
\end{equation}
is the variance of $\widehat{A}$; its positive square root
$\operatorname*{Var}_{\rho}\widehat{A}$ is called \textquotedblleft standard
deviation\textquotedblright. More generally consider a second observable
$\widehat{B}$; then the covariance\ of the pair $(\widehat{A},\widehat{B})$
with respect to $\widehat{\rho}$ is defined by%
\begin{equation}
\operatorname*{Cov}\nolimits_{\rho}(\widehat{A},\widehat{B}%
)=\operatorname*{Tr}(\widehat{\rho}\widehat{A}\widehat{B})-\operatorname*{Tr}%
(\widehat{\rho}\widehat{A})\operatorname*{Tr}(\widehat{\rho}\widehat{B}).
\label{covdef}%
\end{equation}
It is in general a complex number, and we have%
\begin{equation}
\overline{\operatorname*{Cov}\nolimits_{\rho}(\widehat{A},\widehat{B}%
)}=\operatorname*{Cov}\nolimits_{\rho}(\widehat{B},\widehat{A}).
\label{covconj}%
\end{equation}
The covariance has the properties of a complex scalar product; it therefore
satisfies the Cauchy--Schwarz inequality%
\begin{equation}
\operatorname*{Tr}(\widehat{\rho}\widehat{A}^{2})\operatorname*{Tr}%
(\widehat{\rho}\widehat{B}^{2})\geq|\operatorname*{Cov}\nolimits_{\rho
}(\widehat{A},\widehat{B})|^{2}. \label{CS}%
\end{equation}
\ 

The following lemma will be useful in the proof of the uncertainty
inequalities below:

\begin{lemma}
If the covariance of two observables $\widehat{A}$ and $\widehat{B}$ exist we
have%
\begin{align}
\operatorname{Re}\operatorname*{Cov}\nolimits_{\rho}(\widehat{A},\widehat{B})
&  =\tfrac{1}{2}\operatorname*{Tr}(\widehat{\rho}\{\widehat{A},\widehat
{B}\})-\operatorname*{Tr}(\widehat{\rho}\widehat{A})\operatorname*{Tr}%
(\widehat{\rho}\widehat{B})\label{recov}\\
\operatorname{Im}\operatorname*{Cov}\nolimits_{\rho}(\widehat{A},\widehat{B})
&  =\tfrac{1}{2i}\operatorname*{Tr}(\widehat{\rho}[\widehat{A},\widehat{B}])
\label{imcov}%
\end{align}
where $\{\widehat{A},\widehat{B}\}=\widehat{A}\widehat{B}+\widehat{B}%
\widehat{A}$ and $[\widehat{A},\widehat{B}]=\widehat{A}\widehat{B}-\widehat
{B}\widehat{A}$ are, respectively, the anticommutator and the commutator of
$\widehat{A}$ and $\widehat{B}$.
\end{lemma}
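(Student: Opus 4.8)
The plan is to reduce everything to the conjugacy relation (\ref{covconj}) together with the elementary identities $\operatorname{Re}w=\tfrac{1}{2}(w+\overline{w})$ and $\operatorname{Im}w=\tfrac{1}{2i}(w-\overline{w})$, valid for any complex number $w$. Setting $w=\operatorname*{Cov}\nolimits_{\rho}(\widehat{A},\widehat{B})$ and invoking (\ref{covconj}) to rewrite $\overline{w}=\operatorname*{Cov}\nolimits_{\rho}(\widehat{B},\widehat{A})$, both formulas (\ref{recov}) and (\ref{imcov}) fall out after expanding via the definition (\ref{covdef}).

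For the real part I would write
\[
\operatorname{Re}\operatorname*{Cov}\nolimits_{\rho}(\widehat{A},\widehat{B})=\tfrac{1}{2}\left[\operatorname*{Cov}\nolimits_{\rho}(\widehat{A},\widehat{B})+\operatorname*{Cov}\nolimits_{\rho}(\widehat{B},\widehat{A})\right]
\]
and substitute (\ref{covdef}) into both terms. By linearity of the trace the two operator terms combine as $\operatorname*{Tr}(\widehat{\rho}\widehat{A}\widehat{B})+\operatorname*{Tr}(\widehat{\rho}\widehat{B}\widehat{A})=\operatorname*{Tr}(\widehat{\rho}\{\widehat{A},\widehat{B}\})$, while the two product-of-trace terms coincide and add up to $2\operatorname*{Tr}(\widehat{\rho}\widehat{A})\operatorname*{Tr}(\widehat{\rho}\widehat{B})$; dividing by two yields (\ref{recov}) directly.

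For the imaginary part I would start from
\[
\operatorname{Im}\operatorname*{Cov}\nolimits_{\rho}(\widehat{A},\widehat{B})=\tfrac{1}{2i}\left[\operatorname*{Cov}\nolimits_{\rho}(\widehat{A},\widehat{B})-\operatorname*{Cov}\nolimits_{\rho}(\widehat{B},\widehat{A})\right].
\]
Here the product-of-trace terms cancel, since ordinary scalar multiplication commutes, and the difference of the remaining traces is $\operatorname*{Tr}(\widehat{\rho}\widehat{A}\widehat{B})-\operatorname*{Tr}(\widehat{\rho}\widehat{B}\widehat{A})=\operatorname*{Tr}(\widehat{\rho}[\widehat{A},\widehat{B}])$, giving (\ref{imcov}) at once.

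There is essentially no obstacle beyond this bookkeeping; the only point deserving a word is the provenance of (\ref{covconj}) itself, which rests on $\widehat{\rho}$, $\widehat{A}$, $\widehat{B}$ being self-adjoint together with the cyclicity of the trace. Since $\widehat{A}$ and $\widehat{B}$ are observables and $\widehat{\rho}$ is a density matrix, the expectation values $\operatorname*{Tr}(\widehat{\rho}\widehat{A})$ and $\operatorname*{Tr}(\widehat{\rho}\widehat{B})$ are real, which confirms a posteriori that the product term genuinely belongs to the real contribution in (\ref{recov}) and correctly disappears from the imaginary part.
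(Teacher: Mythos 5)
Your proof is correct and takes essentially the same route as the paper's own argument: both use the conjugacy relation (\ref{covconj}) together with $2\operatorname{Re}w=w+\overline{w}$, $2i\operatorname{Im}w=w-\overline{w}$, expand via the definition (\ref{covdef}), and collect the operator terms into the anticommutator and commutator while the product-of-trace terms double (real part) or cancel (imaginary part). Your closing remark on why (\ref{covconj}) holds is accurate supplementary commentary, not a deviation in method.
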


\begin{proof}
We have, in view of (\ref{covconj}),
\begin{align*}
2\operatorname{Re}\operatorname*{Cov}\nolimits_{\rho}(\widehat{A},\widehat
{B})  &  =\operatorname*{Cov}\nolimits_{\rho}(\widehat{A},\widehat
{B})+\operatorname*{Cov}\nolimits_{\rho}(\widehat{B},\widehat{A})\\
&  =\operatorname*{Tr}(\widehat{\rho}\{\widehat{A},\widehat{B}%
\})-2\operatorname*{Tr}(\widehat{\rho}\widehat{A})\operatorname*{Tr}%
(\widehat{\rho}\widehat{B});
\end{align*}
for the second equality%
\begin{align*}
2i\operatorname{Im}\operatorname*{Cov}\nolimits_{\rho}(\widehat{A},\widehat
{B})  &  =\operatorname*{Cov}\nolimits_{\rho}(\widehat{A},\widehat
{B})-\operatorname*{Cov}\nolimits_{\rho}(\widehat{B},\widehat{A})\\
&  =\operatorname*{Tr}(\widehat{\rho}\widehat{A}\widehat{B}%
)-\operatorname*{Tr}(\widehat{\rho}\widehat{B}\widehat{A})\\
&  =\operatorname*{Tr}(\widehat{\rho}[\widehat{A},\widehat{B}]).
\end{align*}

\end{proof}

Observe that the anticommutator and commutator obey the relations
\[
\{\widehat{A},\widehat{B}\}^{\ast}=\{\widehat{A},\widehat{B}\}\text{ \ ,
\ }[\widehat{A},\widehat{B}]^{\ast}=-[\widehat{A},\widehat{B}]
\]
and hence $\langle\lbrack\widehat{A},\widehat{B}]\rangle$ is a pure imaginary
number or zero; in particular $|\langle\lbrack\widehat{A},\widehat{B}%
]\rangle|^{2}\leq0$.

\begin{proposition}
\label{propUP}If the variances and covariance of two observables $\widehat{A}$
and $\widehat{B}$ exist then:
\begin{equation}
(\operatorname*{Var}\nolimits_{\rho}\widehat{A})^{2}(\operatorname*{Var}%
\nolimits_{\rho}\widehat{B})^{2}\geq\operatorname*{Cov}\nolimits_{\rho
}^{\mathrm{sym}}(\widehat{A},\widehat{B})^{2}-\tfrac{1}{4}\langle
\lbrack\widehat{A},\widehat{B}]\rangle_{\rho}^{2} \label{principe1}%
\end{equation}
where $\langle\lbrack\widehat{A},\widehat{B}]\rangle^{2}<0$ and%
\begin{equation}
\operatorname*{Cov}\nolimits_{\rho}^{\mathrm{sym}}(\widehat{A},\widehat
{B})=\tfrac{1}{2}(\operatorname*{Cov}\nolimits_{\rho}(\widehat{A},\widehat
{B})+\operatorname*{Cov}\nolimits_{\rho}(\widehat{B},\widehat{A}))
\label{covsym}%
\end{equation}
is a real number. In particular the Heisenberg inequality%
\begin{equation}
(\operatorname*{Var}\nolimits_{\rho}\widehat{A})^{2}(\operatorname*{Var}%
\nolimits_{\rho}\widehat{B})^{2}\geq-\tfrac{1}{4}\langle\lbrack\widehat
{A},\widehat{B}]\rangle_{\rho}^{2} \label{principe1bis}%
\end{equation}
holds.
\end{proposition}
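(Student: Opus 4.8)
The plan is to derive the inequality by combining the Cauchy--Schwarz inequality (\ref{CS}) for the covariance with the decomposition of $\operatorname*{Cov}\nolimits_{\rho}(\widehat{A},\widehat{B})$ into its real and imaginary parts furnished by the preceding Lemma. The underlying idea is simply that a complex number has modulus-squared equal to the sum of the squares of its real and imaginary parts; the whole content is then to recognize the symmetrized covariance as the real part and the expectation of the commutator as (a fixed multiple of) the imaginary part.

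First I would record that $\operatorname*{Cov}\nolimits_{\rho}^{\mathrm{sym}}(\widehat{A},\widehat{B})=\operatorname{Re}\operatorname*{Cov}\nolimits_{\rho}(\widehat{A},\widehat{B})$: combining the conjugation rule (\ref{covconj}) with the definition (\ref{covsym}) gives
\[
\operatorname*{Cov}\nolimits_{\rho}^{\mathrm{sym}}(\widehat{A},\widehat{B})=\tfrac{1}{2}\left(\operatorname*{Cov}\nolimits_{\rho}(\widehat{A},\widehat{B})+\overline{\operatorname*{Cov}\nolimits_{\rho}(\widehat{A},\widehat{B})}\right)=\operatorname{Re}\operatorname*{Cov}\nolimits_{\rho}(\widehat{A},\widehat{B}),
\]
in accordance with (\ref{recov}). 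Next, by (\ref{imcov}) the imaginary part is $\operatorname{Im}\operatorname*{Cov}\nolimits_{\rho}(\widehat{A},\widehat{B})=\tfrac{1}{2i}\langle\lbrack\widehat{A},\widehat{B}]\rangle_{\rho}$; since $\langle\lbrack\widehat{A},\widehat{B}]\rangle_{\rho}$ is pure imaginary (as observed just before the statement), its square is a nonpositive real number and, because $(2i)^{2}=-4$,
\[
\left(\operatorname{Im}\operatorname*{Cov}\nolimits_{\rho}(\widehat{A},\widehat{B})\right)^{2}=-\tfrac{1}{4}\langle\lbrack\widehat{A},\widehat{B}]\rangle_{\rho}^{2}\geq 0.
\]
Adding the squares of the two parts then yields the key identity
\[
\left|\operatorname*{Cov}\nolimits_{\rho}(\widehat{A},\widehat{B})\right|^{2}=\operatorname*{Cov}\nolimits_{\rho}^{\mathrm{sym}}(\widehat{A},\widehat{B})^{2}-\tfrac{1}{4}\langle\lbrack\widehat{A},\widehat{B}]\rangle_{\rho}^{2}.
\]

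To finish, I would apply the Cauchy--Schwarz inequality (\ref{CS}) to the centered observables $\widehat{A}-\langle\widehat{A}\rangle_{\rho}$ and $\widehat{B}-\langle\widehat{B}\rangle_{\rho}$; under this substitution the covariance is unchanged while the two second-moment factors become the variances, so that $(\operatorname*{Var}\nolimits_{\rho}\widehat{A})^{2}(\operatorname*{Var}\nolimits_{\rho}\widehat{B})^{2}\geq|\operatorname*{Cov}\nolimits_{\rho}(\widehat{A},\widehat{B})|^{2}$. Inserting the identity above gives exactly (\ref{principe1}), and the Heisenberg inequality (\ref{principe1bis}) then follows immediately by discarding the nonnegative term $\operatorname*{Cov}\nolimits_{\rho}^{\mathrm{sym}}(\widehat{A},\widehat{B})^{2}\geq 0$.

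I do not expect a genuine analytic obstacle here, as the existence of the variances and of the covariance is part of the hypothesis and the argument is essentially bookkeeping. The two points that do require care are: first, using Cauchy--Schwarz in its centered (variance) form rather than literally as written in (\ref{CS}) with uncentered second moments; and second, tracking the sign, since the factor $1/(2i)^{2}=-1/4$ must be combined with the nonpositivity of the square of the pure-imaginary number $\langle\lbrack\widehat{A},\widehat{B}]\rangle_{\rho}$ so that $-\tfrac{1}{4}\langle\lbrack\widehat{A},\widehat{B}]\rangle_{\rho}^{2}$ is a genuine nonnegative contribution and the right-hand side of (\ref{principe1}) is a sum of two nonnegative terms.
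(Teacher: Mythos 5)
Your proposal is correct and follows essentially the same route as the paper's own proof: both reduce to centered observables, use the Lemma's identities (\ref{recov})--(\ref{imcov}) to decompose $|\operatorname*{Cov}\nolimits_{\rho}(\widehat{A},\widehat{B})|^{2}$ into $\operatorname*{Cov}\nolimits_{\rho}^{\mathrm{sym}}(\widehat{A},\widehat{B})^{2}-\tfrac{1}{4}\langle[\widehat{A},\widehat{B}]\rangle_{\rho}^{2}$, and conclude by the Cauchy--Schwarz inequality (\ref{CS}). The only cosmetic difference is that the paper centers the observables at the outset while you apply Cauchy--Schwarz to the centered operators and note the invariance of the covariance and commutator under centering, which is the same argument.
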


\begin{proof}
Replacing $\widehat{A}$ and $\widehat{B}$ with $\widehat{A}-\langle\widehat
{A}\rangle$ and $\widehat{B}-\langle\widehat{B}\rangle$ it is sufficient to
prove (\ref{principe1}) when $\langle\widehat{A}\rangle=\langle\widehat
{B}\rangle=0$. We thus have to prove the inequality%
\begin{equation}
\operatorname*{Tr}(\widehat{\rho}\widehat{A}^{2})\operatorname*{Tr}%
(\widehat{\rho}\widehat{B}^{2})\geq\operatorname*{Cov}\nolimits_{\rho
}^{\mathrm{sym}}(\widehat{A},\widehat{B})^{2}-\tfrac{1}{4}\operatorname*{Tr}%
(\widehat{\rho}[\widehat{A},\widehat{B}])^{2}. \label{principe2}%
\end{equation}
Noting that definition (\ref{covsym}) can be rewritten%
\[
\operatorname*{Cov}\nolimits_{\rho}^{\mathrm{sym}}(\widehat{A},\widehat
{B})=\tfrac{1}{2}\operatorname*{Tr}(\widehat{\rho}\{\widehat{A},\widehat
{B}\})-\operatorname*{Tr}(\widehat{\rho}\widehat{A})\operatorname*{Tr}%
(\widehat{\rho}\widehat{B}).
\]
we remark that in view of formulas (\ref{recov}) and (\ref{imcov}) in the
lemma above we have%
\begin{align}
|\operatorname*{Cov}\nolimits_{\rho}(\widehat{A},\widehat{B})|^{2}  &
=\tfrac{1}{4}\operatorname*{Tr}(\widehat{\rho}\{\widehat{A},\widehat{B}%
\})^{2}-\tfrac{1}{4}\operatorname*{Tr}(\widehat{\rho}[\widehat{A},\widehat
{B}])^{2}\label{covtr}\\
&  =\operatorname*{Cov}\nolimits_{\rho}^{\mathrm{sym}}(\widehat{A},\widehat
{B})^{2}-\tfrac{1}{4}\operatorname*{Tr}(\widehat{\rho}[\widehat{A},\widehat
{B}])^{2}%
\end{align}
hence the proof of (\ref{principe2}) is reduced to the proof of the inequality%
\begin{equation}
\operatorname*{Tr}(\widehat{\rho}\widehat{A}^{2})\operatorname*{Tr}%
(\widehat{\rho}\widehat{B}^{2})\geq|\operatorname*{Cov}\nolimits_{\rho
}(\widehat{A},\widehat{B})|^{2} \label{principe3}%
\end{equation}
which is just the Cauchy--Schwarz inequality (\ref{CS}) for covariances.
\end{proof}

\subsection{Weyl vs Born--Jordan}

Let us discuss the similarities and differences between the uncertainty
principles associated with the Weyl and Born--Jordan correspondences. First,
as already observed in the Introduction, the Weyl and Born--Jordan
quantizations of monomials $x_{j}^{m}p_{j}^{n}$ (and hence of their linear
combinations) are identical when $m+n\leq2$. This implies, in particular, that
if the symbols $a$ and $b$ are, respectively, multiplication by the
coordinates $x_{j}$ and $p_{j}$ then the corresponding operators$\ \widehat
{A}$ and $\widehat{B}$ are, in both cases given by $\widehat{X}_{j}=x_{j}$ and
$\widehat{P}_{j}=-i\hbar\partial_{x_{j}}$. It follows that
$\operatorname*{Var}_{\rho}\widehat{X}_{j}$ and $\operatorname*{Var}_{\rho
}\widehat{P}_{j}$ satisfy the usual Robertson--Schr\"{o}dinger inequalities%
\begin{equation}
\operatorname*{Var}\nolimits_{\rho}\widehat{X}_{j}\operatorname*{Var}%
\nolimits_{\rho}\widehat{P}_{j}\geq\operatorname*{Cov}\nolimits_{\rho
}^{\mathrm{sym}}(\widehat{X}_{j},\widehat{P}_{j})^{2}+\tfrac{1}{4}\hbar^{2}.
\label{RS1}%
\end{equation}

We mention that the inequalities (\ref{RS1}) can be rewritten in compact form
as%
\begin{equation}
\Sigma+\tfrac{1}{2}i\hbar J\geq0\label{sigmaj}%
\end{equation}
where $\geq0$ means \textquotedblleft semi-definite positive\textquotedblright%
, $J$ is the standard symplectic matrix, and
\begin{equation}
\Sigma=%
\begin{pmatrix}
\operatorname*{Cov}\nolimits_{\rho}(\widehat{X},\widehat{X}) &
\operatorname*{Cov}\nolimits_{\rho}(\widehat{X},\widehat{P})\\
\operatorname*{Cov}\nolimits_{\rho}(\widehat{P},\widehat{X}) &
\operatorname*{Cov}\nolimits_{\rho}(\widehat{P},\widehat{P})
\end{pmatrix}
\label{covmat}%
\end{equation}
is the statistical covariance matrix. The formulation (\ref{sigmaj}) of the
Robertson--Schr\"{o}dinger inequalities clearly shows one of the main
features, namely the symplectic covariance of these inequalities, which we
have used in previous work \cite{FP,physreps} to express the uncertainty
principle in terms of the notion of symplectic capacity, which is closely
related to Gromov's non-squeezing theorem from symplectic topology. This has
also given us the opportunity to discuss the relations between classical and
quantum mechanics in \cite{gohi}.

Let $\widehat{S}\in\operatorname*{Mp}(2n,\mathbb{R})$, $S=\pi
^{\operatorname*{Mp}}(\widehat{S})$ and set $\widehat{A}^{\prime}=\widehat
{S}\widehat{A}\widehat{S}^{-1}$, $\widehat{B}^{\prime}=\widehat{S}\widehat
{B}\widehat{S}^{-1}$; we are assuming that $\widehat{A},\widehat{B}$
correspond, as in the proof of Proposition \ref{propUP}, to an arbitrary
quantization scheme $a\longleftrightarrow\widehat{A}$. We have quite
generally, using the cyclicity of the trace,%
\begin{align*}
\operatorname*{Cov}\nolimits_{\rho}^{\mathrm{sym}}(\widehat{A},\widehat{B})
&  =\tfrac{1}{2}\operatorname*{Tr}(\widehat{\rho}\{\widehat{A},\widehat
{B}\})-\operatorname*{Tr}(\widehat{\rho}\widehat{A})\operatorname*{Tr}%
(\widehat{\rho}\widehat{B})\\
&  =\tfrac{1}{2}\operatorname*{Tr}(\widehat{\rho}\widehat{S}^{-1}\{\widehat
{A}^{\prime},\widehat{B}^{\prime}\}\widehat{S})-\operatorname*{Tr}%
(\widehat{\rho}\widehat{S}^{-1}\widehat{A}^{\prime}\widehat{S}%
)\operatorname*{Tr}(\widehat{\rho}\widehat{S}^{-1}\widehat{B}^{\prime}%
\widehat{S})\\
&  =\tfrac{1}{2}\operatorname*{Tr}(\widehat{S}^{-1}\widehat{\rho}\widehat
{S}\{\widehat{A}^{\prime},\widehat{B}^{\prime}\})-\operatorname*{Tr}%
(\widehat{S}^{-1}\widehat{\rho}\widehat{S}\widehat{A}^{\prime}%
)\operatorname*{Tr}(\widehat{S}\widehat{\rho}\widehat{S}^{-1}\widehat
{B}^{\prime})\\
&  =\operatorname*{Cov}\nolimits_{\widehat{S}\widehat{\rho}\widehat{S}^{-1}%
}^{\mathrm{sym}}(\widehat{A}^{\prime},\widehat{B}^{\prime});
\end{align*}
similarly $\operatorname*{Var}\nolimits_{\rho}\widehat{A}=\operatorname*{Var}%
\nolimits_{\widehat{S}\widehat{\rho}\widehat{S}^{-1}}\widehat{A}$ and
$\langle\lbrack\widehat{A},\widehat{B}]\rangle_{\rho}^{2}=\langle
\lbrack\widehat{A}^{\prime},\widehat{B}^{\prime}]\rangle_{\widehat{S}%
\widehat{\rho}\widehat{S}^{-1}}^{2}$ and hence%
\[
(\operatorname*{Var}\nolimits_{\rho^{\prime}}\widehat{A}^{\prime}%
)^{2}(\operatorname*{Var}\nolimits_{\rho^{\prime}}\widehat{B}^{\prime}%
)^{2}\geq\operatorname*{Cov}\nolimits_{\rho^{\prime}}^{\mathrm{sym}}%
(\widehat{A}^{\prime},\widehat{B}^{\prime})^{2}-\tfrac{1}{4}\langle
\lbrack\widehat{A}^{\prime},\widehat{B}^{\prime}\rangle_{\rho^{\prime}}^{2}.
\]
with $\rho^{\prime}=\widehat{S}\widehat{\rho}\widehat{S}^{-1}$. Suppose now
that the operator correspondence $a\longleftrightarrow\widehat{A}$ is the Weyl
correspondence; then, by Proposition \ref{propams} we have $\widehat{S}%
\rho\widehat{S}^{-1}=Op(\rho\circ s^{-1})$ and the inequalities
(\ref{principe1}) become%
\begin{equation}
(\operatorname*{Var}\nolimits_{\rho\circ s^{-1}}A_{\mathrm{W}}^{\prime}%
)^{2}(\operatorname*{Var}\nolimits_{\rho\circ s^{-1}}B_{\mathrm{W}}^{\prime
})^{2}\geq\operatorname*{Cov}\nolimits_{\rho\circ s^{-1}}^{\mathrm{sym}%
}(A_{\mathrm{W}}^{\prime},B_{\mathrm{W}}^{\prime})^{2}-\tfrac{1}{4}%
\langle\lbrack A_{\mathrm{W}}^{\prime},B_{\mathrm{W}}^{\prime}]\rangle
_{\rho\circ s^{-1}}^{2}. \label{covupW}%
\end{equation}
Again, in view of Proposition \ref{propams}, in the Born--Jordan case we have
inequality%
\begin{equation}
(\operatorname*{Var}\nolimits_{\rho\circ s^{-1}}A_{\mathrm{BJ}}^{\prime}%
)^{2}(\operatorname*{Var}\nolimits_{\rho\circ s^{-1}}B_{\mathrm{BJ}}^{\prime
})^{2}\geq\operatorname*{Cov}\nolimits_{\rho\circ s^{-1}}^{\mathrm{sym}%
}(A_{\mathrm{BJ}}^{\prime},B_{\mathrm{BJ}}^{\prime})^{2}-\tfrac{1}{4}%
\langle\lbrack A_{\mathrm{BJ}}^{\prime},B_{\mathrm{BJ}}^{\prime}]\rangle
_{\rho\circ s^{-1}}^{2} \label{covupBJ}%
\end{equation}
only for those $\widehat{S}\in\operatorname*{Mp}(2n,\mathbb{R})$ which are
products of metaplectic operators of the type $\widehat{J}$ and $\widehat
{M}_{L,m}$.

\section{Discussion}

There is an old ongoing debate in quantum mechanics on which quantization
scheme is the most adequate for physical applications; an interesting recent
contribution is that of Kauffmann \cite{kauffmann}, who seems to favor the
Born--Jordan correspondence. The introduction of the $\tau$-Wigner and
Born--Jordan distributions has been motivated in time-frequency analysis by
the fact that the usual cross-Wigner distribution gives raise to disturbing
ghost frequencies; it was discovered by Boggiatto and his collaborators
\cite{bogetal,bogetalbis,bogetalter} that these ghost frequencies were
attenuated by averaging over $\tau$.

The study of uncertainties for non-standard situations has been tackled (from
a very different point of view) by Korn \cite{korn}; also see the review paper
\cite{fosi} by Folland and Sitaram, which however unfortunately deliberately
ignores the fundamental issue of covariance. Gibilisco and his collaborators
\cite{gibilisco,gibb0,gibbis} give highly nontrivial refinements of
uncertainty relations using convexity properties, and studied the notion of
statistical covariance in depth.

We mention that in a very well written thesis, published as a book, Steiger
\cite{steiger} has given an interesting historical review and analysis of the
evolution of the uncertainty principle; in addition he compares the interest
\ of several different formulations, and gives a clever elementary derivation
of the Robertson--Schr\"{o}dinger inequalities for operators. The work also
contains Matematica codes for the computation of (co-)variances.

\begin{acknowledgement}
This work has been supported by a research grant from the Austrian Research
Agency FWF (Projektnummer P23902-N13).
\end{acknowledgement}

\end{document}